\journal{Control Engineering Practice}
\newtheorem{theorem}{Theorem}[section]
\newtheorem{proposition}{Proposition}[section]
\newtheorem{lemma}{Lemma}[section]
\newtheorem{definition}{Definition}[section]
\newtheorem{assumption}{Assumption}[section]
\newtheorem{remark}{Remark}[section]
\begin{document}
\begin{frontmatter}


\title{Physics--guided neural networks for feedforward control \\with input--to--state--stability guarantees\tnoteref{label10}}
\tnotetext[label10]{This work is supported by the NWO research project PGN Mechatronics, project number 17973.}
\author[label1]{Max Bolderman\corref{cor1}}
\ead{m.bolderman@tue.nl}
\author[label1,labelASML]{Hans Butler}
\author[labelCanon]{Sjirk Koekebakker}
\author[labelPhilips]{Eelco van Horssen}
\author[labelASML]{Ramidin Kamidi}
\author[labelIBS]{Theresa Spaan--Burke}
\author[labelIBS]{Nard Strijbosch}
\author[label1]{Mircea Lazar}

\cortext[cor1]{Corresponding author.}
\affiliation[label1]{organization={Eindhoven University of Technology},addressline={Groene Loper 19},city={Eindhoven},postcode={5612 AP},country={The Netherlands}}
\affiliation[labelASML]{organization={ASML},
             addressline={De Run 6501},
             city={Veldhoven},
             postcode={5504 DR},
             country={The Netherlands}}
\affiliation[labelCanon]{organization={Canon Production Printing},
             addressline={St. Urbanusweg 43},
             city={Venlo},
             postcode={5900 MA},
             country={The Netherlands}}
\affiliation[labelPhilips]{organization={Philips Engineering Solutions},
             addressline={High Tech Campus 34},
             city={Eindhoven},
             postcode={5656 AE},
             country={The Netherlands}}
\affiliation[labelIBS]{organization={IBS Precision Engineering},
            addressline={ESP 201},
            city={Eindhoven},
            postcode={5633 AD},
            country={The Netherlands}}

\title{}

\begin{abstract}
The increasing demand on precision and throughput within high--precision mechatronics industries requires a new generation of feedforward controllers with higher accuracy than existing, physics--based feedforward controllers. As neural networks are universal approximators, they can in principle yield feedforward controllers with a higher accuracy, but suffer from bad extrapolation outside the training data set, which makes them unsafe for implementation in industry. Motivated by this, we develop a novel physics--guided neural network (PGNN) architecture that structurally merges a physics--based layer and a black--box neural layer in a single model. The parameters of the two layers are simultaneously identified, while a novel regularization cost function is used to prevent competition among layers and to preserve consistency of the physics--based parameters. Moreover, in order to ensure stability of PGNN feedforward controllers, we develop sufficient conditions for analyzing or imposing (during training) input--to--state stability of PGNNs, based on novel, less conservative Lipschitz bounds for neural networks. The developed PGNN feedforward control framework is validated on a real--life, high--precision industrial linear motor used in lithography machines, where it reaches a factor $2$ improvement with respect to physics--based mass--friction feedforward and it significantly outperforms alternative neural network based feedforward controllers.
\end{abstract}

\begin{keyword}
Feedforward control \sep neural networks \sep nonlinear system identification \sep high--precision mechatronics \sep linear motors.


\end{keyword}
\end{frontmatter}


\section{Introduction}
\label{sec:Introduction}
The field of high--precision mechatronics requires continuously innovating control methods to facilitate the ever--increasing demands on both throughput as well as accuracy. 
For example, wafer scanners in lithography machines used for semiconductor manufacturing~\cite{Heertjes2020} require sub--nanometer position accuracy at velocities and accelerations exceeding $1$ $\frac{m}{s}$ and $30$ $\frac{m}{s^2}$, respectively, see~\cite[Chapter~9]{Schmidt2014}. 
On a similar note, the ability to increase the throughput while decreasing the position error can allow for the use of components that are manufactured with larger tolerances and thereby improving the cost effectiveness.
This is an objective in, for example, the manufacturing industry of printing applications using relatively lower--cost stepping motors~\cite{Derammelaere2014}. 

Feedforward control is a dominant actor in achieving this high position accuracy, while feedback control predominantly concerns the closed--loop stability and disturbance rejection~\cite{Steinbuch2000}. 
Inverse model--based feedforward controllers generate the feedforward input by passing the reference through a model of the inverse system dynamics, and are therefore robust against varying references.
Conventionally, these models are derived from underlying physical knowledge, and can be linear~\cite{Boerlage2003, Dai2020}, linear in the parameters~\cite{Igarashi2021, Blanken2020b} or nonlinear~\cite{Jamaluding2009}. 
However, deriving a model from physical knowledge generally yields undermodelling which limits their performance when applied as feedforward controllers~\cite{Devasia2002}, e.g., the model does not include parasitic effects such as nonlinear friction and electromagnetic distortions that typically arise from manufacturing tolerances~\cite{Nguyen2015}. 
Therefore, more general model structures that can learn nonlinear and unknown parasitic effects are needed for improving performance of feedforward controllers.
Alternatively, iterative learning control (ILC) achieves superior performance with limited model accuracy, but it requires several repetitions of the same reference before this performance is reached~\cite{Bristow2006}.

Neural networks (NNs) are a good candidate for increasing the accuracy of feedforward controllers because of their universal approximation capabilities. Indeed, NNs  have already been used in system identification~\cite{Ljung2020} as well as to design feedforward controllers, see, e.g.,~\cite{Sorensen1999, Ren2009, Aarnoudse2021}. More recently, recurrent neural networks (RNNs) have also been used in identification, see e.g., \cite{Perrusquia2021},  \cite{Wang2017} (long--short--term mememory RNNs) and identification for feedforward control \cite{Hu2020} (gated recurrent unit RNNs). However, modeling the system dynamics as a black--box NN or RNN comes with the loss of underlying laws of physics, which increases sensitivity to the training data set. More specifically, as it will be shown in this paper, feedforward controllers based on black--box NNs extrapolate badly outside the training data set, which makes them unsafe for usage in high--precision mechatronics.

In order to enhance compliance of NNs with underlying physics laws, physics--guided or physics--informed neural networks (PINNs) were introduced in \citep{Karpatne2017} and \citep{Karniadakis2019}, respectively. In both these approaches, the system dynamics is still modeled as a black--box NN, but a loss function is used in training to penalize the deviation of the NN output from a physics--based model output. While the physics--based loss function promotes compliance with physics on the training data set, it does not necessarily improve extrapolation outside the training data set.

In this paper we develop a novel physics--guided neural network (PGNN) architecture for feedforward control that structurally merges a physics--based layer and a black--box neural layer in a single model. As the parameters of the two layers are simultaneously identified, this can lead to competition among layers and loss of interpretability of the physics--based parameters. To address this challenge, we develop a novel regularization cost function that prevents competition among layers and preserves interpretability of the physics--based parameters. Differently from black--box NNs or PINNs, which result in a NN as the feedforward controller, the developed PGNN feedforward controller consists of a physics--based and a NN--based layer, which delivers both higher precision and good extrapolation outside the training data set.

Moreover, we develop sufficient conditions for analyzing (after training) and imposing (during training) input--to--state stability of PGNN feedforward controllers based on a new type of Lipschitz bounds for neural networks, which is less conservative than existing bounds, see, e.g., \citep{Bonassi2021}. The developed conditions also provide a tight bound on the output of PGNN feedforward controllers, which is typically required in practice. The developed methodology is validated on a real--life industrial coreless linear motor from the lithography industry and compared with state--of--the--art alternative feedforward controllers, i.e., based on physics, black--box NNs and PINNs. The developed PGNN feedforward controller outperformes the physics--based feedforward controller by a factor $2$ in terms of the mean--absolute error and has significantly better accuracy and extrapolation outside the training data set with respect to the black--box NN and PINN feedforward controllers.


\begin{remark}
Compared to the authors' previous conference papers~\cite{Bolderman2021, Bolderman2022a}, we present the following original contributions in this journal paper: $i)$ generalized regularization cost function that includes regularization of parameters for both NN and physics--based layers and optimal selection of the regularization weights;
$ii)$ novel regularization cost based on simulated outputs of physics--based models, which explictely promotes PGNN (or PINN) complience with physics outside the training data set; $iii)$ ISS guarantees for PGNN feedforward controllers based on less conservative Lipschitz bounds, which enables the design of a stable nonlinear feedforward controller for nonminimum phase systems; $iv)$ novel, real--life experimental results.
\end{remark}

The remainder of this paper is organized as follows.
Sec.~\ref{sec:Preliminaries} introduces inversion--based feedforward control and states the considered research problem. 
Sec.~\ref{sec:PGNNFeedforward} introduces the novel PGNN architecture along with novel regularization cost functions, initialization and tuning methods. Conditions for analyzing and imposing ISS of PGNN feedforward controllers are developed in Sec.~\ref{sec:Stability}. 
Efficacy of the developed methodology is demonstrated on a real--life coreless linear motor (CLM) and a nonminimum phase simulation example in Sec.~\ref{sec:Validation}. Conclusions are summarized in Sec.~\ref{sec:Conclusions}. 

For streamlining exposition of the results, in this paper all proofs are reported in Appendices. 

\section{Preliminaries}
\label{sec:Preliminaries}
\subsection{Feedforward control preliminaries}
Fig.~\ref{fig:ControlScheme} displays a standard feedback--feedforward control scheme, where $u(t)$ is the control input and $y(t)$ the system output, with time $t \in \mathbb{R}_{\geq 0}$. 
The control objective in high--precision mechatronics is typically to minimize the tracking error $e(k) := r(k) - y(k)$, where $r(k)$ is the reference, and $y(k)$ is the measured output at discrete--time instants $k \in \mathbb{Z}_{\geq 0}$. 
\begin{figure}
    \centering
    \includegraphics[width=1.0\linewidth]{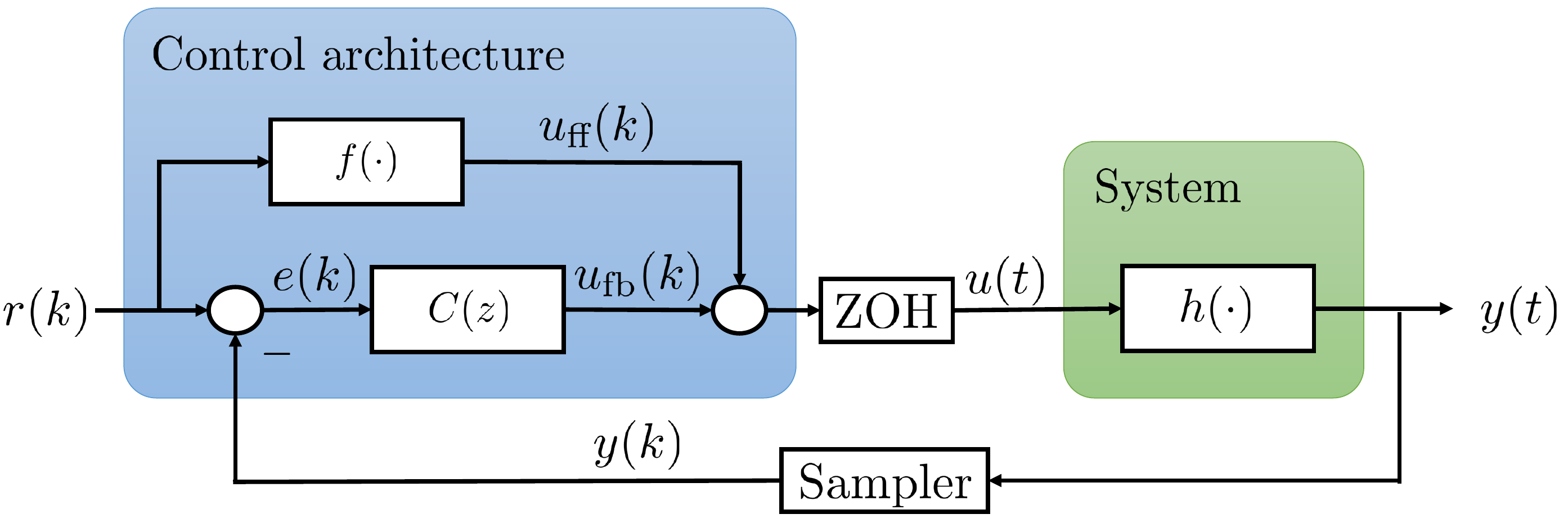}
    \caption{Feedback--feedforward control architecture.}
    \label{fig:ControlScheme}
\end{figure}
The control input $u(t)$ is computed at discrete--time instances $k$ according to
\begin{equation}
    \label{eq:InputDiscrete}
    u(k) = u_{\textup{fb}}(k) + u_{\textup{ff}}(k),
\end{equation}
where $u_{\textup{fb}}(k)$ is the feedback and $u_{\textup{ff}}(k)$ the feedforward input. The zero--order--hold (ZOH) in Fig.~\ref{fig:ControlScheme} is a discrete--to--continuous (D2c) operator, which lets $u(t) = u(k)$ for $t \in [k T_s, (k+1)T_s )$, with sampling time $T_s \in \mathbb{R}_{>0}$. 
The feedback input is given as 
\begin{equation}
    \label{eq:InputFeedback}
    u_{\textup{fb}}(k) = C(q) e(k),
\end{equation}
where $q$ is the forward shift operator, e.g., $e(k) = q e(k-1)$ and $C(q)$ a rational transfer function. 

We consider a nonlinear, discrete--time, input--output representation of the system dynamics, i.e., including the ZOH and the sampled outputs, such that
\begin{align}
\begin{split}
    \label{eq:NonlinearDynamics}
    y(k) = & h \big( [y(k-1), ..., y(k-n_a), u(k-n_k-1), ..., u(k-n_k-n_b)]^T \big),
\end{split}
\end{align}
where $n_a \in \mathbb{Z}_{\geq 0}$, $n_b \in \mathbb{Z}_{>0}$ are the order of the dynamics, $n_k \in \mathbb{Z}_{\geq 0}$ is the number of pure input delays, and $h : \mathbb{R}^{n_a+n_b} \rightarrow \mathbb{R}$ is a nonlinear function that describes the system dynamics. 
Assuming that there exists an exact inverse of $h$ in~\eqref{eq:NonlinearDynamics}, with a slight abuse of notation, we can define the ideal control input that inverts the system as 
\begin{align}
\begin{split}
    \label{eq:NonlinearDynamicsInverse}
    &u(k) =  h^{-1} \big( \phi(k) \big) \\
    &:=  h^{-1} \big( [y(k+n_k+1), ..., y(k+n_k-n_a+1), u(k-1), ..., u(k-n_b+1)]^T \big).
\end{split}
\end{align}
Consider that $y(k) = r(k)$, such that $e(k) = 0$ for all $k \in \mathbb{Z}_{\geq 0}$. 
Then, from~\eqref{eq:InputFeedback} we observe that $u_{\textup{fb}}(k) = 0$ which gives $u(k) = u_{\textup{ff}}(k)$ in~\eqref{eq:InputDiscrete}.
Consequently, substituting $u(k) = u_{\textup{ff}}(k)$ and $y(k) = r(k)$ in the inverse dynamics~\eqref{eq:NonlinearDynamicsInverse} yields the ideal feedforward controller
\begin{align}
\begin{split}
    \label{eq:NonlinearFeedforwardController}
    u_{\textup{ff}}(k) = & h^{-1} \big( \phi_{\textup{ff}}(k) \big) \\
    := & h^{-1} \big( [r(k+n_k+1), ..., r(k+n_k-n_a+1), \\
    & \quad \quad \quad u_{\textup{ff}}(k-1), ..., u_{\textup{ff}}(k-n_b+1)]^T \big). 
\end{split}
\end{align}
A linear feedforward controller is recovered by assuming that the system~\eqref{eq:NonlinearDynamics} is linear, such that
\begin{align}
\begin{split}
    \label{eq:LinearFeedforwardController}
    u_{\textup{ff}}(k)  & = \sum_{i=0}^{n_a} a_i r(k+n_k+1-i) - \sum_{i=1}^{n_b-1} b_i u_{\textup{ff}}(k-i) = G^{-1} (q) r(k),
\end{split}
\end{align}
where $a_i, b_i \in \mathbb{R}$ are the coefficients of the corresponding inverse transfer function $G^{-1}(q)$, and $G(q)$ is the transfer function of the linear system dynamics. In order to obtain an implementable feedforward controller~\eqref{eq:NonlinearFeedforwardController} or~\eqref{eq:LinearFeedforwardController}, we adopt the typical assumptions:

\begin{enumerate}
    \item \emph{Reference preview:} future reference values up to $r(k+n_k+1)$ are known at time instant $k$;
    \item \emph{Stable inverse dynamics:} the feedforward input $u_{\textup{ff}}(k)$ remains bounded for bounded reference $r(k)$. 
\end{enumerate}

\begin{remark}
    Stability of the linear feedforward~\eqref{eq:LinearFeedforwardController} is assessed by checking the poles of $G^{-1}(q)$. 
    When it is not stable, i.e., $G(q)$ is nonminimum phase, it is common practice to use non--causal filtering or stable approximate inversion techniques to obtain a minimum phase (approximation) of $G(q)$, see, e.g.,~\cite{Zundert2018} for an overview. 
\end{remark}
\begin{remark}
    In this paper we focus on the tracking problem, i.e., minimizing the output tracking error $e(k)$ with respect to a desired reference $r(k)$. 
    Nevertheless, the methods proposed in this work can be extended to reject known disturbances acting on the closed--loop system.
\end{remark}

In general, the function $h^{-1}$ in~\eqref{eq:NonlinearDynamics} is unknown.
Moreover, the precision demanded by industry exceeds manufacturing tolerances, which implies that a machine--specific $h^{-1}$ needs to be found. 
Even when designing a linear feedforward controller as in~\eqref{eq:LinearFeedforwardController}, the parameters $a_i$ and $b_i$ are, in principle, unknown. 
For these reasons, a systematic data--based approach for finding a model of the inverse system dynamics $h^{-1}$ would be desirable. 

\subsection{Identified inverse--model based feedforward control}
In order to introduce existing model classes for inversion--based feedforward control, we recall the three main ingredients required for direct identification of the inverse dynamics~\eqref{eq:NonlinearDynamicsInverse}, i.e., the data set, the model class, and the identification criterion.

\emph{\textbf{Data set:}} we consider the availability of a data set that is generated on the system displayed in Fig.~\ref{fig:ControlScheme}, i.e., satisfying the dynamics~\eqref{eq:NonlinearDynamics}.
As a result we have
\begin{equation}
    \label{eq:DataSet}
    Z^N = \{ \phi_0, u_0, \hdots, \phi_{N-1}, u_{N-1} \},
\end{equation}
where $\phi_i := \phi(i)$ and $u_i := u(i)$ for $i \in \{0, \hdots, N-1 \}$ in~\eqref{eq:NonlinearDynamicsInverse} during the data generating experiment, with $N \in \mathbb{Z}_{>0}$ the number of data points. 

\emph{\textbf{Model class:}} we require a model parametrization of the inverse system dynamics.
\begin{definition}
\label{def:ModelParametrization}
    A model parametrization of the inverse system dynamics is given as
    \begin{equation}
        \label{eq:ModelParametrization}
        \hat{u} \big( \theta, \phi(k) \big) = f \big( \theta, \phi(k) \big),
    \end{equation}
    where $\hat{u} \big( \theta, \phi(k) \big)$ is the prediction of the input $u (k)$, $\theta \in \mathbb{R}^{n_{\theta}}$ denotes the vector of parameters with $n_{\theta} \in \mathbb{Z}_{>0}$, and $f: \mathbb{R}^{n_{\theta}} \times \mathbb{R}^{n_a+n_b} \rightarrow \mathbb{R}$ is a user--defined function. 
\end{definition}
\emph{\textbf{Identification criterion:}} the identification criterion defines the best choice of parameters $\theta$ such that the output of the model~\eqref{eq:ModelParametrization} fits the output of the inverse system dynamics~\eqref{eq:NonlinearDynamicsInverse} on the data set~\eqref{eq:DataSet}. 
Typically, the identification criterion aims to minimize a cost function, i.e.,
\begin{equation}
    \label{eq:IdentificationCriterion}
    \hat{\theta} = \textup{arg} \min_{\theta} V ( \theta, Z^N),
\end{equation}
such as the mean--squared error (MSE) 
\begin{equation}
    \label{eq:CostFunctionMSE}
    V ( \theta, Z^N ) = V_{\textup{MSE}} (\theta, Z^N ) := \frac{1}{N} \sum_{i=0}^{N-1} \big(u_i - \hat{u}( \theta, \phi_i ) \big)^2.
\end{equation}
Similar to~\eqref{eq:NonlinearFeedforwardController} and~\eqref{eq:LinearFeedforwardController}, the feedforward controller is obtained by computing the input $u(k)$ that yields $y(k) = r(k)$ for the identified model, i.e.,~\eqref{eq:ModelParametrization} with $\theta = \hat{\theta}$ from~\eqref{eq:IdentificationCriterion}, such that
\begin{equation}
    \label{eq:FeedforwardIdentifiedGeneral}
    u_{\textup{ff}}(k) = \hat{u} \big( \hat{\theta} , \phi_{\textup{ff}}(k) \big) = f \big( \hat{\theta}, \phi_{\textup{ff}}(k) \big). 
\end{equation}

The model parametrization~\eqref{eq:ModelParametrization} is a crucial choice made by the user, since it characterizes the flexibility and robustness of the model. Two popular examples are:
\begin{enumerate}
    \item \emph{Physics--based} model (typically used by the high--precision mechatronics industry due to reasonable accuracy and good extrapolation \citep{Schmidt2014}), often derived from first--principle knowledge of the system, such that
    \begin{equation}
    \label{eq:PhysicsBasedParametrization}
    f \big( \theta, \phi(k) \big) = f_{\textup{phy}} \big( \theta_{\textup{phy}}, \phi(k) \big),
\end{equation}
where $\theta_{\textup{phy}} \in \mathbb{R}^{n_{\theta_{\textup{phy}}}}$ are the physical parameters.
    \item \emph{Black--box neural network} model (originally proposed for feedforward control in \citep{Sorensen1999}, but still currently not widely used in industrial practice despite higher accuracy, due to safety issues) as a universal approximator, such that
    \begin{align}
    \begin{split}
        \label{eq:NNParametrization}
        f \big( \theta,\phi(k) \big) & = f_{\textup{NN}} \big( \theta_{\textup{NN}}, \phi(k) \big) \\
        & = W_{L+1} \alpha_L \Big( \hdots \alpha_1 \big( W_1 \phi(k) + B_1  \big) \Big) + B_{L+1},
    \end{split}
    \end{align}
    with $\alpha_l : \mathbb{R}^{n_l} \rightarrow \mathbb{R}^{n_l}$ the aggregation of activation functions, $n_l \in \mathbb{Z}_{>0}$ the number of neurons in layer $l \in \{ 1, \hdots, L \}$, $L \in \mathbb{Z}_{>0}$ the number of hidden layers, and $\theta_{\textup{NN}} := [\textup{col}(W_1)^T, B_1^T, \hdots, \textup{col}(W_{L+1})^T, B_{L+1}^T]^T$ are all weights $W_l \in \mathbb{R}^{n_l\times n_{l-1}}$ and biases $B_l \in \mathbb{R}^{n_l}$, where $\textup{col}(W_l)$ stacks the columns of $W_l$. 
\end{enumerate}
\emph{Physics--informed neural networks} \citep{Karpatne2017, Karniadakis2019} also use a black--box NN model as in~\eqref{eq:NNParametrization}, but additionally employ a loss training cost function that penalizes the deviation of the NN model ouput from the physics--based model output on the data set. More specifically, the identification criterion~\eqref{eq:IdentificationCriterion} minimizes
\begin{equation}
    \label{eq:CostFunctionPINN}
        V (\theta, Z^N) = V_{\textup{MSE}} (\theta, Z^N) + c V_{\textup{phy}} (\theta, Z^N),
\end{equation}
where $c \in \mathbb{R}_{>0}$ defines the relative importance between data fit and the physical model compliance, which is given as
\begin{equation}
    \label{eq:PhysicalCompliance}
    V_{\textup{phy}} (\theta, Z^N) = \frac{1}{N} \sum_{i=0}^{N-1} \big( f_{\textup{phy}}(\theta_{\textup{phy}}, \phi_i ) - \hat{u} (\theta_{\textup{NN}}, \phi_i) \big)^2.
\end{equation}
During training of the PINN, it is possible to train both $\theta_{\textup{phy}}$ and $\theta_{\textup{NN}}$ simultaneously, or to train $\theta_{\textup{NN}}$ for a fixed $\theta_{\textup{phy}} = \theta_{\textup{phy}}^*$. Clearly, the PINN model class realizes a trade--off between data fit and compliance with the output of a physics--based model, which affects all the parameters of the NN, and hence, the accuracy of the resulting feedforward controller, as it will be shown in this paper.

\begin{remark} Alternatively to standard neural networks, i.e., feedforward neural networks \citep{Nelles2001}, recurrent neural networks (RNNs) can be used to model the inverse dynamics, as recently proposed in \cite{Hu2020, Perrusquia2021, Bonassi2022}. In general, RNNs have improved modeling capabilities, but training and implementing RNNs is more complex compared to feedforward NNs, which are just an input--ouput map. However, with respect to extrapolation properties, RNNs are still black--box models and hence, they do not necessarily behave well outside the training data set. As the high--precision mechatronics industry requires feedforward controllers with both high accuracy and good extrapolation outside the training data set, in this work we focus on merging physics--based models and feedforward NNs. A similar approach could be further adopted to merge physics--based models and RNNs, but this is beyond the scope of this paper.
\end{remark}

\subsection{Problem statement}
\label{sec:ProblemStatement}
To illustrate the intrinsic limitations of the physics--based~model \eqref{eq:PhysicsBasedParametrization} and of the NN--based model~\eqref{eq:NNParametrization}, we consider an industrial coreless linear motor as case study for feedforward control design, which is described in detail in Sec.~\ref{sec:ValidationRealLife}. For the sake of illustration, it suffices to state that the CLM is modelled as a moving mass experiencing nonlinear friction characteristics, such that Newton's second law gives the continuous time dynamics
\begin{equation}
    \label{eq:CLM_ContinuousTime}
    u(t) = m \ddot{y} (t) + F_{\textup{fric}} \big( y(t), \dot{y}(t) \big), 
\end{equation}
where $u(t)$ is the force input, $y(t)$ the position output, $\dot{y}(t)$ the velocity, $\ddot{y}(t)$ the acceleration, $m \in \mathbb{R}_{>0}$ the mass, and $F_{\textup{fric}}: \mathbb{R} \times \mathbb{R} \rightarrow \mathbb{R}$ the nonlinear friction.

\begin{figure}
    \centering
    \begin{subfigure}{0.49\linewidth}
        \includegraphics[width=1\linewidth]{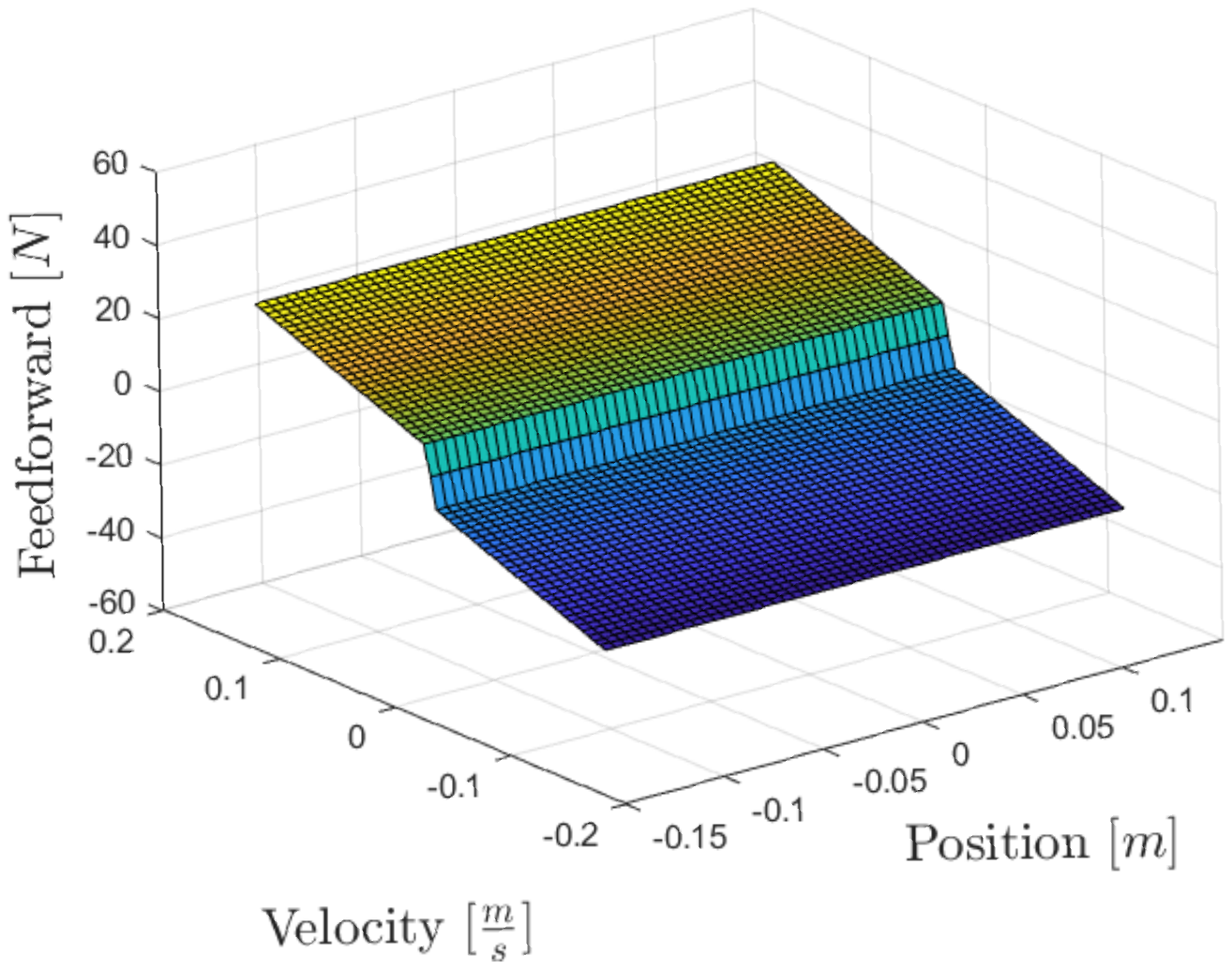}
        \caption{Physics--based inverse model.}
    \end{subfigure}
    \begin{subfigure}{0.49\linewidth}
        \includegraphics[width=1\linewidth]{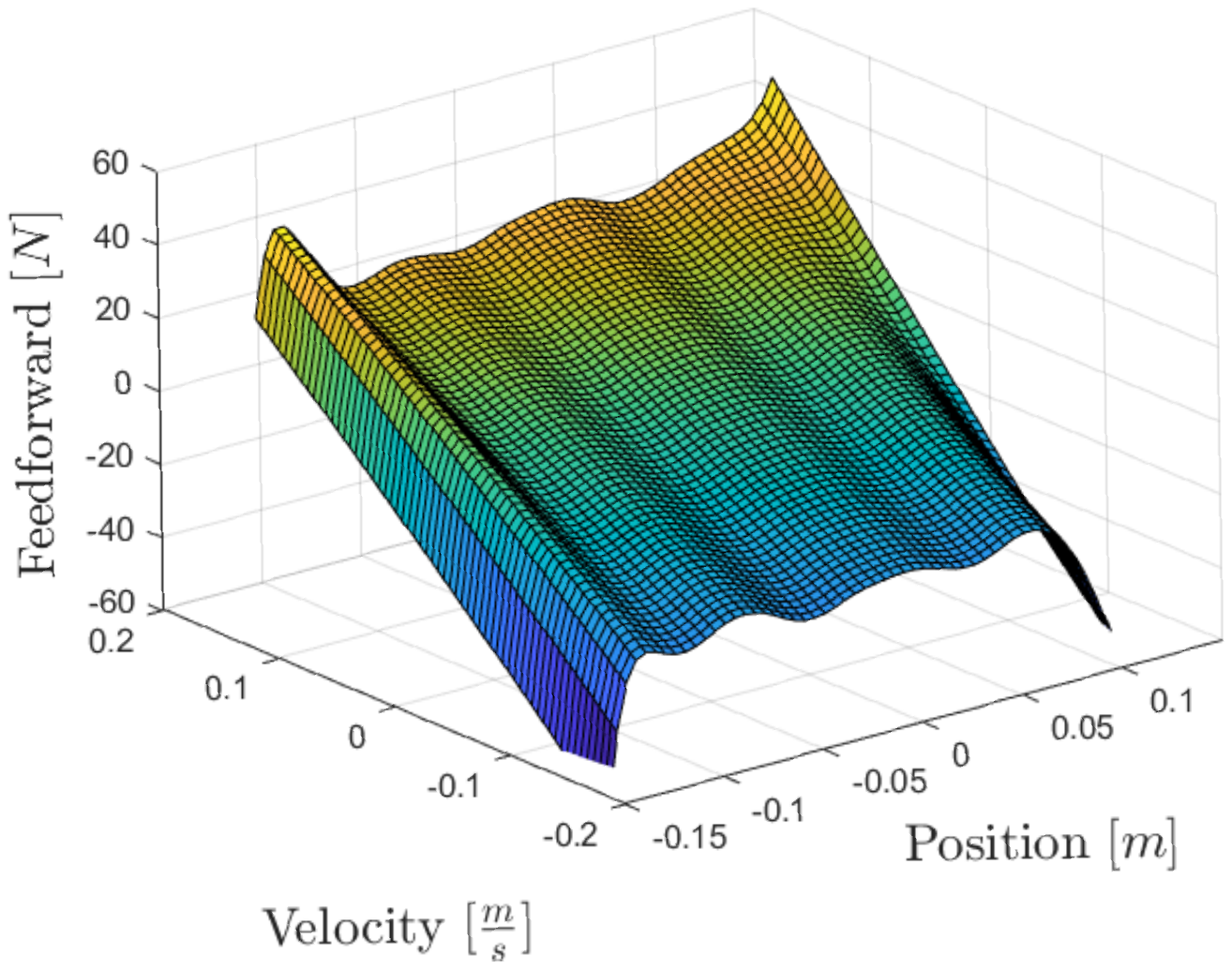}
        \caption{Black--box NN inverse model.}
    \end{subfigure}
    \caption{Identified friction $F_{\textup{fric}}$ for different model parametrizations obtained by computing the feedforward signal for varying reference positions $r(k)$ and velocities $\dot{r}(k)$ with zero acceleration $\ddot{r}(k) = 0$.}
    \label{fig:FrictionModels}
\end{figure}

Suppose that a reference $r(k)$ with zero acceleration is used, i.e., $\ddot{r}(k) = 0$. 
Then,~\eqref{eq:CLM_ContinuousTime} indicates that the feedforward controller only compensates for the friction $F_{\textup{fric}} \big( r(k), \dot{r}(k) \big)$. 
As a result, we visualize the friction model resulting from an identified physics--based model~\eqref{eq:PhysicsBasedParametrization} and a NN model~\eqref{eq:NNParametrization} in Fig.~\ref{fig:FrictionModels}, see Sec.~\ref{sec:ValidationRealLife} for details on the discretization, the models and the training.
Fig.~\ref{fig:FrictionModels} illustrates the main advantages and disadvantages of both approaches.

On one hand, the physics--based inverse model has limited accuracy, i.e., it does not learn the position dependency of the friction, but it extrapolates extremely well outside the training data set. Hence, it is safe to use in practice. The NN--based model on the other hand, identifies the position dependency of the friction, but it extrapolates very poorly outside the training data set. Hence it is not deemed safe for implementation in practice. 

\emph{Therefore, the problem considered in this paper is how to effectively merge physics--based models and neural networks within the context of feedforward control design, such that it becomes possible to obtain feedforward controllers that share the benefits of both types of models, i.e., high accuracy and good extrapolation.}

Note that existing approaches that combine physics--based models with neural networks do not offer a satisfactory solution to this problem. For example, a common approach \citep{Nelles2001} is to first identify a physics--based model and then to train the NN on the residuals. However, in this approach the NN cannot correct the bias of the physics--based model, due to sequential identification, which leads to a sub--optimal result and biased physics--based parameters when the NN cannot describe the residuals. Alternatively, PINNs \cite{Karpatne2017, Karniadakis2019} use a physics--based training cost function to promote compliance of a NN with a physics--based model on the training data set. However, since PINNs are essentially black--box NNs, good extrapolation outside the training data set is not necessarily guaranteed. 

To solve the stated problem, in the next section we introduce a novel physics--guided neural network architecture, which  merges a physics--based layer and a black--box neural network layer within a single model, and simultaneously identifies both the physics--based and the NN parameters. This brings two new challenges, to which we provide solutions in this paper: \emph{(i)}~how to avoid competition among the physics--based and NN layers and preserve consistency of the physics--based parameters (which is critical for good extrapolation) and \emph{(ii)}~how to guarantee robust stability of the resulting PGNN inverse model, which is a nonlinear model (hence, stability analysis methods for linear systems do not apply). \\




\section{Physics--guided neural networks for feedforward control}
\label{sec:PGNNFeedforward}
A schematic illustration of the developed PGNN architecture is shown in Fig.~\ref{fig:PGNN}, which we formally define as follows.
\begin{definition}
\label{def:PGNN}
The PGNN model parameterization of the inverse dynamics is given as
\begin{equation}
    \label{eq:PGNNGeneral}
    \hat{u} \big( \theta, \phi(k) \big) = f_{\textup{phy}} \big( \theta_{\textup{phy}} , \phi(k) \big) + f_{\textup{NN}} \big( \theta_{\textup{NN}}, T \big( \phi(k) \big) \big),
\end{equation}
where $\theta = [\theta_{\textup{NN}}^T, \theta_{\textup{phy}}^T]^T$ are the PGNN parameters which include the parameters of the NN--based layer $\theta_{\textup{NN}}$ and the parameters of the physics--based layer $\theta_{\textup{phy}}$. In addition, $T : \mathbb{R}^{n_a+n_b} \rightarrow \mathbb{R}^{n_0}$ is an input transformation, with $n_0 \in \mathbb{Z}_{>0}$ the number of NN inputs. 
\begin{figure}
    \centering
    \includegraphics[width=0.8\linewidth]{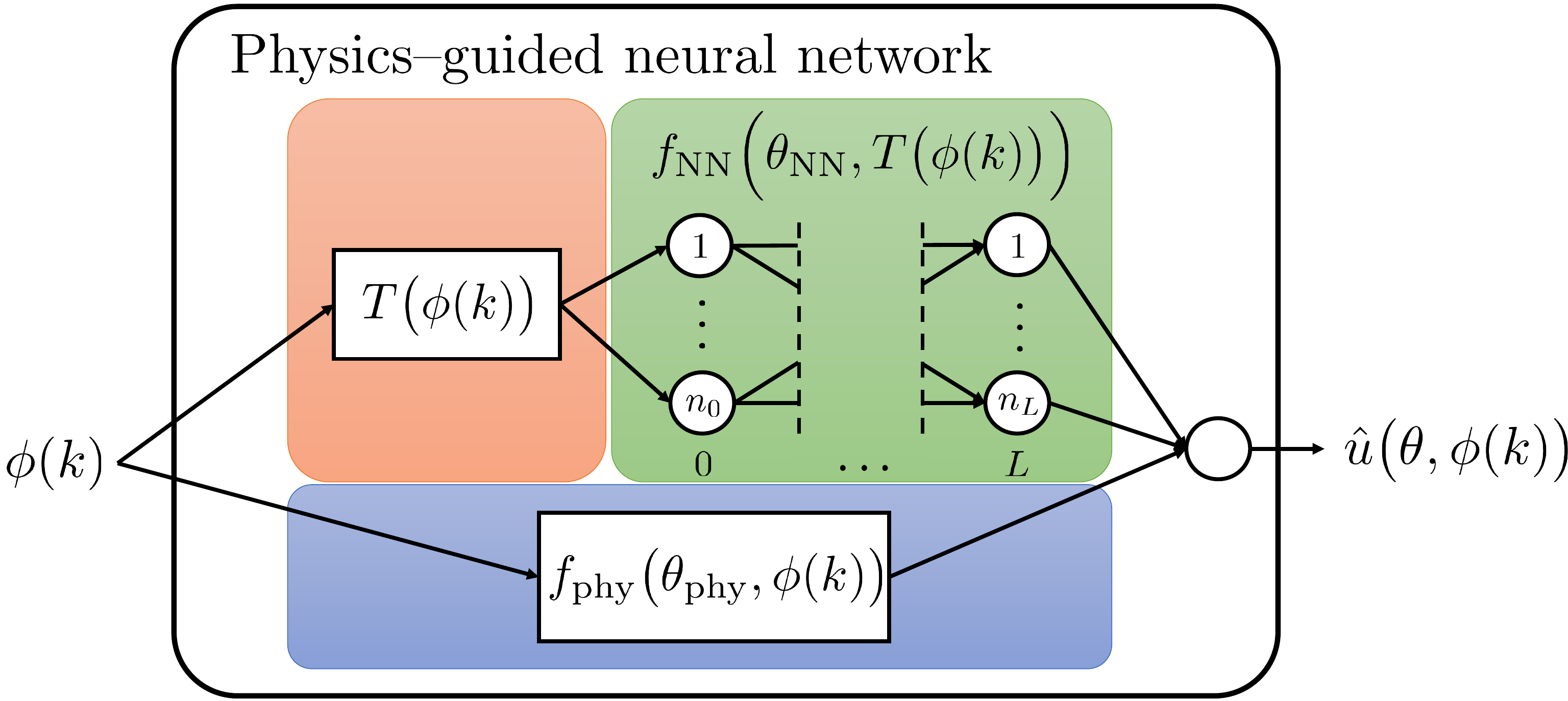}
    \caption{Physics--guided neural network architecture with physics and NN layers.}
    \label{fig:PGNN}
\end{figure}
\end{definition}
The input transformation $T$ can be used to improve numerical properties of the PGNN training, as well as to create physically relevant inputs, e.g., by converting consecutive output values to discrete velocities, or by imposing the rotational reproducible behaviour of rotary motors. 

\subsection{Regularized PGNN inverse system identification}
The PGNN~\eqref{eq:PGNNGeneral} is generally overparameterized, i.e., the NN can identify parts of the physical model which results in a parameter drift during training.
As a result, the physics--based layer does no longer constitute any physical interpretation, while the NN--based layer generates unnecessarily large outputs which amplify the undesired characteristics of the NN model.
Therefore, we employ a regularized cost function
\begin{equation}
    \label{eq:CostFunctionPGNN}
    V ( \theta, Z^N ) = V_{\textup{MSE}} (\theta, Z^N) + V_{\textup{reg}} (\theta),
\end{equation}
with the regularization cost given as
\begin{equation}
    \label{eq:RegularizationTerm}
    V_{\textup{reg}} (\theta) := \left\| \begin{bmatrix} \Lambda_{\textup{NN}} & 0 \\ 0 & \Lambda_{\textup{phy}} \end{bmatrix} \left( \theta - \begin{bmatrix}  0 \\ \theta_{\textup{phy}}^* \end{bmatrix} \right) \right\|_2^2.
\end{equation}
The parameters $\theta_\text{phy}^\ast$ are obtained as in \eqref{eq:IdentificationCriterion} with the inverse system dynamics parameterized by the physics--based model \eqref{eq:PhysicsBasedParametrization}. This is consistent with the so--called \emph{best linear approximation} that fits the data set, typically used in nonlinear system identification as initial parameter values \cite{Schoukens2019}. In~\eqref{eq:RegularizationTerm}, $\Lambda_{\textup{NN}} , \Lambda_{\textup{phy}}$ are matrices that define the relative importance of the different regularization terms. 
Note that, $\Lambda_{\textup{NN}}$ relates to the standard $\mathcal{L}_2$ regularization for the network weights and biases~\cite[Chapter~7]{Nelles2001}, while $\Lambda_{\textup{phy}}$ solves the overparameterization of the PGNN model~\eqref{eq:PGNNGeneral}. 
\begin{figure}
    \centering
    \includegraphics[width=0.9\linewidth]{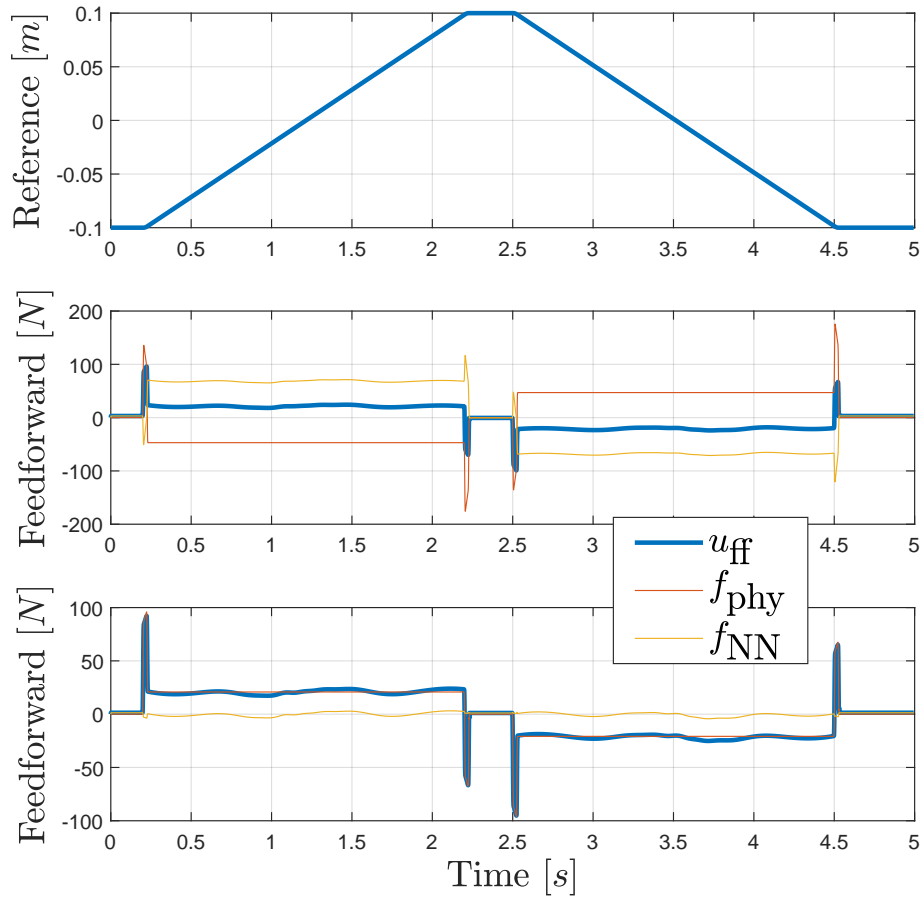}
    \caption{Reference signal (top window) and feedforward signal generated by the PGNN~\eqref{eq:PGNNGeneral} trained according to identification criterion~\eqref{eq:IdentificationCriterion} with cost function~\eqref{eq:CostFunctionMSE} (middle window), and cost function~\eqref{eq:CostFunctionPGNN} with $\Lambda_{\textup{phy}} = \textup{diag} (\theta_{\textup{phy}}^*)^{-1}$ and $\Lambda_{\textup{NN}} = 0$ (bottom window). }
    \label{fig:Feedforward_Regularization_Effect}
\end{figure}

Fig.~\ref{fig:Feedforward_Regularization_Effect} shows the feedforward signal generated by the physics--based and NN--based layer of the PGNN~\eqref{eq:PGNNGeneral}, which is trained using either the standard MSE cost function~\eqref{eq:CostFunctionMSE} (middle plot) or the regularized PGNN cost function~\eqref{eq:CostFunctionPGNN} (bottom plot) on data of the CLM, see Sec.~\ref{sec:ValidationRealLife} for details. For the regularized cost function~\eqref{eq:CostFunctionPGNN}, the NN--based layer augments the physics--based layer, while for the MSE cost function~\eqref{eq:CostFunctionMSE} layers start to compete and interpretability with physics is lost.

Next we provide some theoretical results regarding the properties of PGNN feedforward controllers. To this end we introduce the following formal definitions and assumptions. 
\begin{definition}
\label{def:UnknownDynamics}
    Given the physical model~\eqref{eq:PhysicsBasedParametrization}, we define the structural model error as 
    \begin{equation}
        \label{eq:StructuralModelError}
        g \big( \phi(k) \big) := h^{-1} \big( \phi(k) \big) - f_{\textup{phy}} \big( \theta_{\textup{phy}}^*, \phi(k) \big),
    \end{equation} 
    where $\theta_{\textup{phy}}^*$ are the physical parameters identified according to~\eqref{eq:IdentificationCriterion},~\eqref{eq:CostFunctionMSE}.
\end{definition}
Consequently, the inverse dynamics~\eqref{eq:NonlinearDynamicsInverse} can be rewritten into
\begin{equation}
    \label{eq:UnknownDynamics}
    u(k) = f_{\textup{phy}} \big( \theta_{\textup{phy}}^*, \phi(k) \big) + g \big( \phi(k) \big). 
\end{equation}
\begin{definition}
\label{def:OperatingConditions}
    Denote $\Phi_{\textup{ff}} \subseteq \mathbb{R}^{n_a+n_b}$ as all regressor points $\phi_{\textup{ff}}(k)$ supplied to the feedforward controller~\eqref{eq:PGNNGeneral} for all references $r(k)$ and all $k$. 
    Then, the set of operating conditions $\mathcal{R}$ is defined as
   \begin{equation}
       \label{eq:OperatingConditions}
       \mathcal{R} := \Phi_{\textup{ff}} \cup \{ \phi_0, ..., \phi_{N-1} \}.
    \end{equation}
\end{definition}
As an example, it is possible to obtain the operating conditions $\mathcal{R}$ by considering maxima and minima on the position, velocity, and acceleration. 

\begin{assumption}
\label{as:PGNNModelSet}
    There exists a $\theta_{\textup{NN}}^*$ such that $f_{\textup{NN}} \big( \theta_{\textup{NN}}^*, T\big(\phi(k) \big) \big) = g \big( \phi(k) \big)$ for all $\phi(k) \in \mathcal{R}$. 
\end{assumption}
Assumption~\ref{as:PGNNModelSet} dictates that there should exist a choice of parameters for which the PGNN~\eqref{eq:PGNNGeneral} recovers the original inverse system~\eqref{eq:NonlinearDynamicsInverse}, i.e., the system should be in the model class.
In practice, this is iteratively achieved by increasing the number of neurons $n_i$ or the number of layers $L$. 

\begin{assumption}
\label{as:PersistenceOfExcitation}
    For two sets of parameters $\theta_{\textup{NN}}^A \neq \theta_{\textup{NN}}^B$ with outputs $f_{\textup{NN}} \big( \theta_{\textup{NN}}^A, T\big(\phi(k)) \big) \neq f_{\textup{NN}} \big( \theta_{\textup{NN}}^B, T\big(\phi(k)) \big)$ for some $\phi(k) \in \mathcal{R}$, it holds that
    \begin{equation}
        \label{eq:PersistenceOfExcitation}
        \frac{1}{N} \sum_{i = 0}^{N-1} \left( f_{\textup{NN}} \big( \theta_{\textup{NN}}^A, T(\phi_i) \big) - f_{\textup{NN}} \big( \theta_{\textup{NN}}^B, T(\phi_i) \big) \right)^2 > 0.
    \end{equation}
\end{assumption}
Assumption~\ref{as:PersistenceOfExcitation} describes persistence of excitation in a nonlinear setting, i.e., if two sets of parameters give a different output in the operating conditions $\mathcal{R}$, this must be observed in the training data $Z^N$. 
Consequently, it is important to ensure that the training data reflects the operating conditions~\cite{Schoukens2019}.

\begin{assumption}
\label{as:TrainingConvergence}
    The minimization of~\eqref{eq:CostFunctionPGNN} over $\theta$ yields a global optimum. 
\end{assumption}
Since the cost function~\eqref{eq:CostFunctionPGNN} is in general non--convex, only local convergence guarantees can be established.  In practice, to avoid ending up in a local minimum, multiple trainings are performed with random parameter initialization.

\begin{proposition}[PGNN consistency]
\label{prop:ConsistentIdentification}
    Consider the PGNN~\eqref{eq:PGNNGeneral} that is used to identify the inverse dynamics~\eqref{eq:NonlinearDynamicsInverse} according to identification criterion~\eqref{eq:IdentificationCriterion} with cost function~\eqref{eq:CostFunctionPGNN} using $\Lambda_{\textup{NN}} = 0$ and $\Lambda_{\textup{phy}}$ full rank. 
    Suppose that Assumptions~\ref{as:PGNNModelSet},~\ref{as:PersistenceOfExcitation}, and~\ref{as:TrainingConvergence} hold.
    Then, the identified PGNN parameters satisfy $\hat{\theta} = [\hat{\theta}_{\textup{phy}}^T, \hat{\theta}_{\textup{NN}}^T]^T = [\theta_{\textup{phy}}^{*^T}, \theta_{\textup{NN}}^{*^T}]^T$. 
\end{proposition}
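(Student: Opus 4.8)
The plan is to exhibit the candidate parameter vector $\theta^\star := [\theta_{\textup{NN}}^{*T}, \theta_{\textup{phy}}^{*T}]^T$ as the \emph{unique} global minimizer of the regularized cost \eqref{eq:CostFunctionPGNN} by showing that it attains the value zero, and then arguing that a cost of zero forces each block of $\hat{\theta}$ to coincide with the corresponding block of $\theta^\star$.

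First I would evaluate the cost at $\theta^\star$. Since $\Lambda_{\textup{NN}} = 0$, the regularization \eqref{eq:RegularizationTerm} reduces to $\|\Lambda_{\textup{phy}}(\theta_{\textup{phy}} - \theta_{\textup{phy}}^*)\|_2^2$, which vanishes at $\theta_{\textup{phy}} = \theta_{\textup{phy}}^*$. For the MSE term I would use the decomposition \eqref{eq:UnknownDynamics}, namely $u_i = f_{\textup{phy}}(\theta_{\textup{phy}}^*, \phi_i) + g(\phi_i)$, together with Assumption~\ref{as:PGNNModelSet}, which supplies a $\theta_{\textup{NN}}^*$ with $f_{\textup{NN}}(\theta_{\textup{NN}}^*, T(\phi_i)) = g(\phi_i)$ for every $\phi_i$ in the operating set $\mathcal{R}$, and in particular for all data regressors since $\{\phi_0,\dots,\phi_{N-1}\} \subseteq \mathcal{R}$ by \eqref{eq:OperatingConditions}. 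Substituting into the PGNN model \eqref{eq:PGNNGeneral} then gives $\hat{u}(\theta^\star, \phi_i) = u_i$ for all $i$, so $V_{\textup{MSE}}(\theta^\star, Z^N) = 0$ and hence $V(\theta^\star, Z^N) = 0$. Because $V$ is a sum of squares it is nonnegative, so $\theta^\star$ is a global minimizer with value zero, and by Assumption~\ref{as:TrainingConvergence} the identified $\hat{\theta}$ also satisfies $V(\hat{\theta}, Z^N) = 0$.

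The second half of the argument establishes uniqueness. Nonnegativity of each term forces $V_{\textup{MSE}}(\hat{\theta}, Z^N) = 0$ and $V_{\textup{reg}}(\hat{\theta}) = 0$ separately. The vanishing of $V_{\textup{reg}}$ together with $\Lambda_{\textup{phy}}$ being full rank gives immediately $\hat{\theta}_{\textup{phy}} = \theta_{\textup{phy}}^*$. Feeding this back into $V_{\textup{MSE}} = 0$ and again invoking \eqref{eq:UnknownDynamics} and Assumption~\ref{as:PGNNModelSet}, I would cancel the now-identical physics layers to obtain $f_{\textup{NN}}(\hat{\theta}_{\textup{NN}}, T(\phi_i)) = f_{\textup{NN}}(\theta_{\textup{NN}}^*, T(\phi_i))$ for all data indices $i$, which is exactly the statement that the left-hand side of \eqref{eq:PersistenceOfExcitation} equals zero. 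The contrapositive of Assumption~\ref{as:PersistenceOfExcitation} then yields $\hat{\theta}_{\textup{NN}} = \theta_{\textup{NN}}^*$, completing the claim.

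The main obstacle is precisely this last step. Assumption~\ref{as:PersistenceOfExcitation} as stated only guarantees that the data-based sum is strictly positive when two parameter vectors produce \emph{different outputs somewhere on} $\mathcal{R}$; its contrapositive therefore strictly delivers output equivalence, $f_{\textup{NN}}(\hat{\theta}_{\textup{NN}}, T(\cdot)) \equiv f_{\textup{NN}}(\theta_{\textup{NN}}^*, T(\cdot))$ on $\mathcal{R}$, rather than literal parameter equality. To promote this to $\hat{\theta}_{\textup{NN}} = \theta_{\textup{NN}}^*$ one must either read the assumption as excluding the usual neural-network weight symmetries (neuron permutations and sign flips that leave the input--output map unchanged), i.e.\ assume identifiability of the NN parametrization, or interpret the conclusion up to such symmetries. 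I would state explicitly which convention is adopted, since this is the only gap between the output-level consistency that the three assumptions genuinely provide and the parameter-level consistency asserted in the proposition.
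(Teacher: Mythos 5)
Your proposal is correct and follows essentially the same route as the paper's proof: substitute the decomposition~\eqref{eq:UnknownDynamics} into the cost~\eqref{eq:CostFunctionPGNN}, observe that the two non--negative terms vanish simultaneously at $[\theta_{\textup{NN}}^{*^T},\theta_{\textup{phy}}^{*^T}]^T$ (using Assumption~\ref{as:PGNNModelSet} and the fact that the data regressors lie in $\mathcal{R}$), and then recover each parameter block from the vanishing of its own term. Your closing caveat is well taken: the paper's proof is terser and makes the same silent leap from $f_{\textup{NN}}(\hat{\theta}_{\textup{NN}},T(\phi_i))=f_{\textup{NN}}(\theta_{\textup{NN}}^{*},T(\phi_i))$ on the data to $\hat{\theta}_{\textup{NN}}=\theta_{\textup{NN}}^{*}$, so Assumption~\ref{as:PersistenceOfExcitation} must indeed be read as a parameter--identifiability condition, or the conclusion understood modulo the usual neural--network weight symmetries.
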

\begin{proof}
    See~\ref{app:propConsistentIdentification}. 
\end{proof}

    From Assumption~\ref{as:PGNNModelSet} and~\eqref{eq:UnknownDynamics}, it is observed that the identified PGNN recovers the inverse dynamics for all operating conditions, i.e.,
    \begin{equation}
    \label{eq:PGNNRecovers}
        f_{\textup{phy}} \big( \hat{\theta}_{\textup{phy}}, \phi(k) \big) + f_{\textup{NN}} \big( \hat{\theta}_{\textup{NN}}, \phi(k) \big) = h^{-1} \big( \phi(k) \big), \; \forall \; \phi(k) \in \mathcal{R}.
    \end{equation}

\begin{remark}
    A parallel linear--NN model structure was also employed for feedforward control design in~\cite{Kon2022}. 
    Therein, an alternative regularization method based on orthogonal projection was developed to avoid the competition between the linear and NN layers under the assumption that $g(\cdot)$ is identically zero outside $\mathcal{R}$. 
\end{remark}


\subsection{Optimized PGNN parameter selection}
Violation of Assumption~\ref{as:PGNNModelSet} or~\ref{as:TrainingConvergence} invalidates Proposition~\ref{prop:ConsistentIdentification}, since the system is not in the model class, or the optimization does not yield a global optimum.
In what follows, we derive a specific choice for the parameters $\theta$ which achieve a smaller value of the cost function~\eqref{eq:CostFunctionPGNN} compared to using the stand--alone physical model, i.e., we ensure the PGNN to improve over the physics.

We denote $\theta^{(j)}$ as the parameters at epoch $j \in \{ 0, ..., J\}$ during the optimization of~\eqref{eq:IdentificationCriterion}, with $J \in \mathbb{Z}_{\geq 0}$ the number of epochs before the solver terminates.
The identified parameter vector is then given as
\begin{equation}
    \label{eq:SolutionParameterEstimate}
    \hat{\theta} = \theta^{(j)}, \quad j = \textup{arg} \min_{j \in \{0, \hdots, J\}} V \big( \theta^{(j)}, Z^N \big). 
\end{equation}
For simplicity, we consider a linear--in--the--parameters (LIP) physical model~\eqref{eq:PhysicsBasedParametrization}, i.e., $f_{\textup{phy}} \big( \theta_{\textup{phy}}, \phi(k) \big) = \theta_{\textup{phy}}^T T_{\textup{phy}} \big( \phi(k) \big)$, and rewrite the PGNN~\eqref{eq:PGNNGeneral} according to
\begin{equation}
    \label{eq:PGNNLIP}
    \hat{u} \big( \theta, \phi(k) \big) = \theta_{\textup{L}}^T \phi_{\textup{L}} \big( \theta_{\textup{NL}}, \phi(k) \big) := \theta_{\textup{L}}^T \begin{bmatrix} \alpha_L \big( \theta_{\textup{NL}}, \phi(k) \big) \\ 1 \\ T_{\textup{phy}} \big( \phi(k) \big)  \end{bmatrix},
\end{equation}
where $\theta_{\textup{L}} := [\textup{col}(W_{L+1})^T, B_{L+1}, \theta_{\textup{phy}}^T]^T$ are the parameters in which the PGNN~\eqref{eq:PGNNLIP} is linear, and $\theta_{\textup{NL}} := [\textup{col}(W_1)^T, B_1^T, \hdots, \textup{col}(W_L)^T, B_L^T]^T$, such that $\theta = [\theta_{\textup{NL}}^T, \theta_{\textup{L}}^T]^T$.
From~\eqref{eq:PGNNLIP}, we observe that an equivalent of the physical model~\eqref{eq:PhysicsBasedParametrization} is obtained by selecting $\theta_{\textup{L}} = \overline{\theta}_{\textup{L}} := [0, 0, {\theta_{\textup{phy}}^{*^T}}]^T$. 
We rewrite the regularization term~\eqref{eq:RegularizationTerm} into
\begin{equation}
    \label{eq:RegularizationTermRewritten}
    V_{\textup{reg}} (\theta) = \left\| \begin{bmatrix} \Lambda_{\textup{NL}} & \Lambda \\ \Lambda^T & \Lambda_{\textup{L}} \end{bmatrix} \left( \begin{bmatrix} \theta_{\textup{NL}} \\ \theta_{\textup{L}} \end{bmatrix} - \begin{bmatrix} 0 \\ \overline{\theta}_{\textup{L}} \end{bmatrix} \right) \right\|_2^2,
\end{equation}
where $\Lambda$, $\Lambda_{\textup{NL}}^T = \Lambda_{\textup{NL}}$, and $\Lambda_{\textup{L}}^T = \Lambda_{\textup{L}}$ are obtained by selecting rows and columns of $\Lambda_{\textup{NN}}$ and $\Lambda_{\textup{phy}}$ accordingly. 
Then, given any parameter set $\theta_{\textup{NL}}^{(j)}$, choose $\theta_{\textup{L}}^{(j)}$ according to
\begin{align}
\begin{split}
    \label{eq:OptimizedInitialization}
    \theta_{\textup{L}}^{(j)} =&  M(\theta_{\textup{NL}}^{(j)})^{-1} \Bigg( \frac{1}{N} \sum_{i=0}^{N-1} u_i \phi_{\textup{L}} ( \theta_{\textup{NL}}^{(j)}, \phi_i )  \\ 
    & +  \Big( \big( \Lambda_{\textup{L}}^2 + \Lambda^T \Lambda \big)\overline{\theta}_{\textup{L}} - \big( \Lambda^T \Lambda_{\textup{NL}} + \Lambda_{\textup{L}} \Lambda^T \big) {\theta_{\textup{NL}}^{(j)}}  \Big) \Bigg),
\end{split}
\end{align}
where $M(\theta_{\textup{NL}}^{(j)})$ is given as
\begin{equation}
    \label{eq:OptimizedInitializationMatrix}
    M(\theta_{\textup{NL}}^{(j)}) := \frac{1}{N} \sum_{i=0}^{N-1} \phi_{\textup{L}} ( \theta_{\textup{NL}}^{(j)}, \phi_i ) \phi_{\textup{L}} ( \theta_{\textup{NL}}^{(j)}, \phi_i )^T + \big( \Lambda_{\textup{L}}^2 + \Lambda^T \Lambda \big).
\end{equation}
Note that~\eqref{eq:OptimizedInitialization} yields a unique solution when $M(\theta_{\textup{NL}}^{(j)})$ is nonsingular.
Consequently, nonsingularity of $M(\theta_{\textup{NL}}^{(j)})$ can be interpreted as persistence of excitation for the identification of $\theta_{\textup{L}}^{(j)}$. 

\begin{proposition}[PGNN improves over physics]
\label{prop:OptimizedInitialization}
    Consider the PGNN~\eqref{eq:PGNNLIP} with $\theta_{\textup{NL}}^{(j)}$ given, e.g., initialized randomly for $j=0$ or attained during training for $j \neq 0$. 
    Suppose that $M (\theta_{\textup{NL}}^ {(j)})$ is nonsingular, and 
    choose $\theta_{\textup{L}}^{(j)}$ according to~\eqref{eq:OptimizedInitialization}. 
    Then, for the cost function~\eqref{eq:CostFunctionPGNN}, we have
    \begin{equation}
        \label{eq:OptimizedInitializationResult1}
        V \left(\begin{bmatrix} \theta_{\textup{NL}}^{(j)} \\ \theta_{\textup{L}}^{(j)} \end{bmatrix}, Z^N \right) \leq V \left( \begin{bmatrix} \theta_{\textup{NL}}^{(j)} \\ \overline{\theta}_{\textup{L}} \end{bmatrix}, Z^N \right), 
    \end{equation}
    with strict inequality if and only if
    \begin{align}
    \begin{split}
        \label{eq:OptimizedInitializationCondition}
        \frac{1}{N} & \sum_{i=0}^{N-1} \left( u_i \phi_{\textup{L}} \big( \theta_{\textup{NL}}^{(j)}, \phi_i \big) - \phi_{\textup{L}} \big( \theta_{\textup{NL}}^{(j)}, \phi_i \big) \phi_{\textup{L}} \big( \theta_{\textup{NL}}^{(j)}, \phi_i \big)^T \overline{\theta}_{\textup{L}} \right)\\
        & \quad \quad \quad \quad -  ( \Lambda^T \Lambda_{\textup{NL}} + \Lambda_{\textup{L}} \Lambda^T )\theta_{\textup{NL}}^{(j)} \neq 0.
    \end{split}
    \end{align}
    Moreover, if $\Lambda_{\textup{NN}}$ is such that $\Lambda = 0$, i.e., the cross products between $\theta_{\textup{NL}}$ and $[W_{L+1}^T, B_{L+1}]^T$ are not regularized, the PGNN achieves a better data fit compared to the physical model, i.e., 
    \begin{equation}
        \label{eq:ImprovedMSE}
        V_{\textup{MSE}} \left( \begin{bmatrix} \theta_{\textup{NL}}^{(j)} \\ \theta_{\textup{L}}^{(j)} \end{bmatrix} , Z^N \right) \leq V_{\textup{MSE}} ( \theta_{\textup{phy}}^*, Z^N ),
    \end{equation}
    with strict inequality if~\eqref{eq:OptimizedInitializationCondition} holds.
\end{proposition}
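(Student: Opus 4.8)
The plan is to exploit the fact that, once the nonlinear parameters are frozen at $\theta_{\textup{NL}}^{(j)}$, the PGNN~\eqref{eq:PGNNLIP} is affine in $\theta_{\textup{L}}$, so the cost $V(\cdot, Z^N)$ becomes a quadratic in $\theta_{\textup{L}}$ alone and the choice~\eqref{eq:OptimizedInitialization} is precisely its unconstrained global minimizer. Both inequalities then follow because $\overline{\theta}_{\textup{L}}$ is merely one competing point. Writing $\varphi_i := \phi_{\textup{L}}(\theta_{\textup{NL}}^{(j)}, \phi_i)$, I would expand $V_{\textup{MSE}}(\theta,Z^N) = \tfrac1N\sum_i (u_i - \theta_{\textup{L}}^T\varphi_i)^2$ and, using the symmetry of $R := \bigl[\begin{smallmatrix}\Lambda_{\textup{NL}} & \Lambda\\\Lambda^T & \Lambda_{\textup{L}}\end{smallmatrix}\bigr]$, the block $V_{\textup{reg}}$ in~\eqref{eq:RegularizationTermRewritten}. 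Differentiating in $\theta_{\textup{L}}$ gives $\partial V/\partial\theta_{\textup{L}} = -\tfrac2N\sum_i u_i\varphi_i + \tfrac2N\sum_i \varphi_i\varphi_i^T\theta_{\textup{L}} + 2(\Lambda^T\Lambda_{\textup{NL}} + \Lambda_{\textup{L}}\Lambda^T)\theta_{\textup{NL}} + 2(\Lambda^T\Lambda + \Lambda_{\textup{L}}^2)(\theta_{\textup{L}}-\overline{\theta}_{\textup{L}})$, and setting this to zero reproduces~\eqref{eq:OptimizedInitialization} exactly, with $M(\theta_{\textup{NL}}^{(j)})$ as in~\eqref{eq:OptimizedInitializationMatrix}.

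The Hessian equals $2M(\theta_{\textup{NL}}^{(j)})$. Since $\tfrac1N\sum_i \varphi_i\varphi_i^T \succeq 0$ and $\Lambda_{\textup{L}}^2 + \Lambda^T\Lambda \succeq 0$ (recall $\Lambda_{\textup{L}}$ is symmetric), $M$ is positive semidefinite, and the assumed nonsingularity upgrades this to $M \succ 0$. Hence $V(\cdot,Z^N)$ is strictly convex in $\theta_{\textup{L}}$ and $\theta_{\textup{L}}^{(j)}$ is its unique minimizer, which immediately yields~\eqref{eq:OptimizedInitializationResult1} with $\overline{\theta}_{\textup{L}}$ as the competitor. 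For the strict case, strict convexity forces equality in~\eqref{eq:OptimizedInitializationResult1} if and only if $\theta_{\textup{L}}^{(j)} = \overline{\theta}_{\textup{L}}$, i.e.\ if and only if the gradient vanishes at $\overline{\theta}_{\textup{L}}$. Evaluating the gradient above at $\theta_{\textup{L}}=\overline{\theta}_{\textup{L}}$ annihilates the term $(\Lambda^T\Lambda+\Lambda_{\textup{L}}^2)(\theta_{\textup{L}}-\overline{\theta}_{\textup{L}})$ and leaves exactly $-2$ times the left-hand side of~\eqref{eq:OptimizedInitializationCondition}; therefore strict inequality holds precisely when that expression is nonzero.

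For the last claim, I would first note that with $W_{L+1}=0$, $B_{L+1}=0$ the PGNN~\eqref{eq:PGNNLIP} evaluated at $\overline{\theta}_{\textup{L}}=[0,0,\theta_{\textup{phy}}^{*^T}]^T$ collapses to the stand-alone physics model $f_{\textup{phy}}(\theta_{\textup{phy}}^*,\cdot)$, so that $V_{\textup{MSE}}([\theta_{\textup{NL}}^{(j)^T},\overline{\theta}_{\textup{L}}^T]^T, Z^N) = V_{\textup{MSE}}(\theta_{\textup{phy}}^*, Z^N)$. When $\Lambda=0$ the regularization decouples into a $\theta_{\textup{NL}}$-block and a $\theta_{\textup{L}}$-block; since the former is common to both candidate points, a direct subtraction gives $V_{\textup{reg}}(\theta_{\textup{L}}^{(j)}) - V_{\textup{reg}}(\overline{\theta}_{\textup{L}}) = (\theta_{\textup{L}}^{(j)}-\overline{\theta}_{\textup{L}})^T\Lambda_{\textup{L}}^2(\theta_{\textup{L}}^{(j)}-\overline{\theta}_{\textup{L}}) \geq 0$. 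Combining this nonnegativity with $V(\theta_{\textup{L}}^{(j)}) \leq V(\overline{\theta}_{\textup{L}})$ and using $V_{\textup{MSE}} = V - V_{\textup{reg}}$ isolates the data-fit term and yields~\eqref{eq:ImprovedMSE}, with strict inequality inherited whenever~\eqref{eq:OptimizedInitializationCondition} holds.

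The main obstacle I anticipate is purely the bookkeeping: correctly expanding $\|R(\cdot)\|_2^2$ through the $2\times2$ block structure, tracking the cross terms $\Lambda^T\Lambda_{\textup{NL}} + \Lambda_{\textup{L}}\Lambda^T$ under the symmetry of $\Lambda_{\textup{NL}}$ and $\Lambda_{\textup{L}}$, and verifying coefficient-for-coefficient that the first-order condition reproduces $M(\theta_{\textup{NL}}^{(j)})$ in~\eqref{eq:OptimizedInitializationMatrix} and the right-hand side of~\eqref{eq:OptimizedInitialization}. Everything else is a standard strictly-convex-quadratic argument.
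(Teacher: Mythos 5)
Your proposal is correct and follows essentially the same route as the paper's proof: recognize $\theta_{\textup{L}}^{(j)}$ in~\eqref{eq:OptimizedInitialization} as the unique (by nonsingularity of $M$) minimizer of the cost, which is quadratic in $\theta_{\textup{L}}$ for fixed $\theta_{\textup{NL}}^{(j)}$, identify~\eqref{eq:OptimizedInitializationCondition} with $\theta_{\textup{L}}^{(j)} \neq \overline{\theta}_{\textup{L}}$ (your gradient-at-$\overline{\theta}_{\textup{L}}$ computation is algebraically the same as the paper's $M(\theta_{\textup{L}}^{(j)} - \overline{\theta}_{\textup{L}}) \neq 0$), and obtain~\eqref{eq:ImprovedMSE} from the fact that $V_{\textup{reg}}$ cannot decrease when moving away from $\overline{\theta}_{\textup{L}}$ once $\Lambda = 0$. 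Your write-up is merely more explicit about the first-order condition and the Hessian than the paper's terser argument.
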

\begin{proof}
    See~\ref{app:PropOptimizedInitialization}. 
\end{proof}


\begin{remark}
    Proposition~\ref{prop:OptimizedInitialization} can be directly extended for PGNNs with physical models that are not LIP. 
    To see this, rewrite the physical model 
    \begin{equation}
        \label{eq:PhysicalModelNonLIP}
        f_{\textup{phy}} \big( \theta_{\textup{phy}}, \phi(k) \big) = \theta_{\textup{L,phy}}^T T_{\textup{phy}} \big( \theta_{\textup{NL,phy}}, \phi(k) \big),
    \end{equation}
    and choose $\theta_{\textup{L}}^{(j)} = [\textup{col}(W_{L+1}^{(j)})^T, B_{L+1}^{(j)}, \theta_{\textup{L,phy}}^{{(j)}^T}]^T$ according to~\eqref{eq:OptimizedInitialization} using $\theta_{\textup{NL}}^{(j)} = [\textup{col}(W_1^{(j)})^{T}, B_1^{(j)^T}, ..., \textup{col}(W_L^{(j)})^T, B_L^{{(j)}^T}, {\theta_{\textup{NL,phy}}^{*}}^T]^{T}$.
\end{remark}

In practice, the optimized parameter selection is used after training, i.e., update $\theta_{\textup{L}}^{(j)}$ for~$j = J$, during training for each~$j=\{0, ..., J \}$, or as an initialization for~$j= 0$. 
Note that, from~\eqref{eq:SolutionParameterEstimate} and Proposition~\ref{prop:OptimizedInitialization}, if~\eqref{eq:OptimizedInitializationCondition} holds, we have that
\begin{equation}
\label{eq:UseageAsInitialization}
    V (\hat{\theta}, Z^N ) \leq V(\theta^{(0)}, Z^N) < V (\overline{\theta}, Z^N ),
\end{equation}
when~\eqref{eq:OptimizedInitialization} is used for initialization of $\theta_{\textup{L}}^{(0)}$.

\begin{figure}
    \centering
    \includegraphics[width=0.7\linewidth]{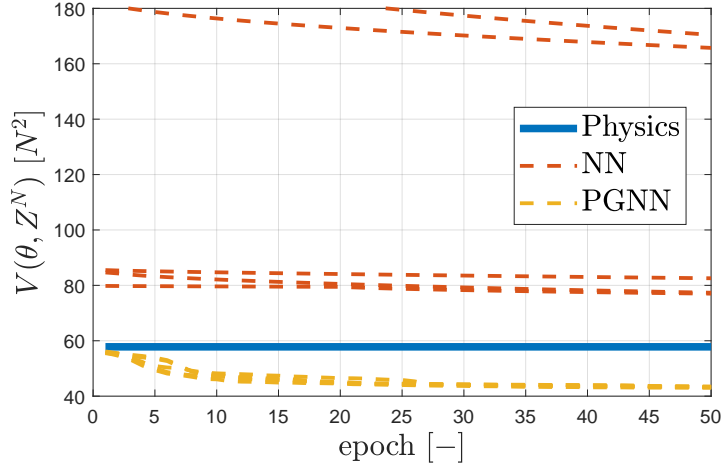}
    \caption{Training process of the NN~\eqref{eq:NNParametrization} and PGNN~\eqref{eq:PGNNGeneral} for $5$ independent trainings with random weight initialization and optimized initialization~\eqref{eq:OptimizedInitialization}. For comparison, $\frac{1}{N} \sum_{i=0}^{N-1} u_i^2 = 1477$~$N^2$.}
    \label{fig:Training_Convergence}
\end{figure}
We demonstrate the effectiveness of the optimized parameter initialization in Proposition~\ref{prop:OptimizedInitialization} by visualizing the value of the cost function for the first $50$ epochs for $5$ different trainings of the NN~\eqref{eq:NNParametrization} and the PGNN~\eqref{eq:PGNNGeneral} in Fig.~\ref{fig:Training_Convergence}, see Sec.~\ref{sec:ValidationRealLife} for the details.
Both the NN and PGNN are initialized with random $\theta_{\textup{NL}}^{(0)}$ and $\theta_{\textup{L}}^{(0)}$ according to~\eqref{eq:OptimizedInitialization}. 
The NN has difficulties to reach even the performance achieved by the stand--alone physics--based model, while the PGNN outperforms the physics--based model already from the first epoch.

\subsection{Enhancing PGNN extrapolation outside the training data set}
In general, there is no systematic method to validate Assumption~\ref{as:PersistenceOfExcitation} in practice. 
Even the guideline to sample the complete domain of interest in~\cite{Schoukens2019}, i.e., the operating conditions~$\mathcal{R}$, can be infeasible for reasons of time and safety.
As a result, we need to enhance robustness for situations in which the PGNN is operated on conditions that were not present in the training data.
Due to the lack of data, the physical model~\eqref{eq:PhysicsBasedParametrization} is the only source of reliable information for these conditions.
Consequently, we induce the robustness of the PGNN by promoting compliance with the physical model by means of regularization, such that~\eqref{eq:CostFunctionPGNN} becomes
\begin{equation}
    \label{eq:CostFunctionPGNNExtrapolation}
    V (\theta, Z^N) = V_{\textup{MSE}}(\theta, Z^N) + V_{\textup{reg}} (\theta) + \gamma V_{\textup{phy}} ( \theta, Z^E).
\end{equation}
In~\eqref{eq:CostFunctionPGNNExtrapolation}, $\gamma \in \mathbb{R}_{>0}$ is a regularization parameter, and the regularization cost is
\begin{equation}
    \label{eq:RegularizationTermPhysics}
    V_{\textup{phy}} (\theta, Z^E) := \frac{1}{E} \sum_{i = 0}^{E-1} \left( f_{\textup{phy}} ( \theta_{\textup{phy}}^* , \phi_i^E ) - \hat{u} ( \theta, \phi_i^E ) \right)^2.
\end{equation}
The set $Z^E = \{ \phi_0^E, \hdots, \phi_{E-1}^E \}$ describes the conditions for which we desire compliance of the PGNN with the physical model, i.e., the operating conditions $\mathcal{R}$ for which no data $Z^N$ is available.
Hence, we aim to have $Z^N \cup Z^E$ cover the operating conditions $\mathcal{R}$ up to high accuracy, which is performed automatically following Algorithm~\ref{alg:ZE}.
In Algorithm~\ref{alg:ZE}, $C ( \zeta, Z^N, Z^E)$ is an objective function that specifies the goal of the optimization. For example, choosing $\phi^E_i$ to maximize the minimum squared Euclidean distance with respect to all available regressor points, is achieved by choosing
\begin{equation}
    \label{eq:ZEObjectiveFunction}
    C(\zeta, Z^N, Z^E) := \min_{\phi \, \in Z^N, Z^E} \left\| \phi - \zeta \right\|_2^2.
\end{equation}
Algorithm~\ref{alg:ZE} iterates until a stopping criterion is met, e.g., by fixing a maximum number of points, or a minimum threshold $\epsilon \in \mathbb{R}_{>0}$ for the objective function~\eqref{eq:ZEObjectiveFunction}. 
\begin{algorithm}
\caption{Design algorithm for $Z^E$.}\label{alg:ZE}
\begin{algorithmic}
\State \textbf{Initialize} $Z^E = \{ \}$, $i = 0$,
\While{ $i < E-1$ $\wedge$ $C\big( \phi_{i-1}^E, Z^N, Z^E \big) > \epsilon$}
	\State $\phi_i^E =  \textup{arg} \max_{\zeta \in \mathcal{R}} C \big( \zeta , Z^N, Z^E \big)$,
	\State $Z^E = Z^E \cup \phi_i^E$,
	\State $i = i+1$.
\EndWhile
\end{algorithmic}
\end{algorithm}

\begin{remark}
    The results of Proposition~\ref{prop:OptimizedInitialization} can be extended directly to the cost function~\eqref{eq:CostFunctionPGNNExtrapolation} by appropriately revising the computation of $\theta_{\textup{L}}^{(j)}$ in~\eqref{eq:OptimizedInitialization}, i.e., compute the least squares solution of~\eqref{eq:CostFunctionPGNNExtrapolation} instead of~\eqref{eq:CostFunctionPGNN}. 
\end{remark}
\begin{figure}
    \centering
    \includegraphics[width=0.7\linewidth]{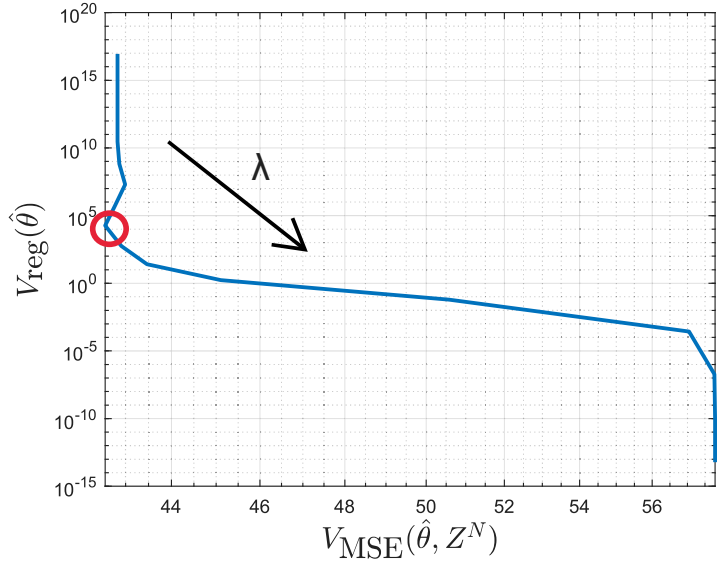}
    \caption{L--curve obtained by training the PGNN~\eqref{eq:PGNNGeneral} according to~\eqref{eq:IdentificationCriterion},~\eqref{eq:CostFunctionPGNN} with $\epsilon = 1$ in~\eqref{eq:RuleOfThumbLambda_phy} and $\Lambda_{\textup{NN}} = \lambda I$ with $20$ values of $\lambda$ logarithmically spaced in $[10^{-18}, 10^8]$, and optimal choice $\lambda = 10^{-5}$ (red circle). The regularization $V_{\textup{reg}}(\hat{\theta})$ is computed for $\Lambda_{\textup{NN}} = I$.}
    \label{fig:L_curve}
\end{figure}
Besides the choice of the NN dimensions, training the PGNN according to~\eqref{eq:CostFunctionPGNNExtrapolation} requires tuning of the hyperparameters $\Lambda_{\textup{phy}} \in \mathbb{R}^{n_{\theta_{\textup{phy}}} \times n_{\theta_{\textup{phy}}}}$, $\Lambda_{\textup{NN}} \in \mathbb{R}^{n_{\theta_{\textup{NN}}} \times n_{\theta_{\textup{NN}}}}$, and $\gamma \in \mathbb{R}_{>0}$.
The following rules--of--thumb are proposed for tuning these hyperparameters:
\begin{enumerate}
    \item Use $\Lambda_{\textup{phy}}$ to normalize for the magnitude of the physics--based parameters $\theta_{\textup{phy}}^*$, e.g.,
    \begin{equation}
        \label{eq:RuleOfThumbLambda_phy}
        \Lambda_{\textup{phy}} = \Big( \frac{1}{\epsilon \, n_{\textup{phy}}} \frac{1}{ N } \sum_{i=1}^{N} \big( u_i - f_{\textup{phy}} (\theta_{\textup{phy}}^*, \phi_i ) \big)^2 \Big)^{\frac{1}{2}} \textup{diag} (\theta_{\textup{phy}}^*)^{-1}.
    \end{equation}
    In~\eqref{eq:RuleOfThumbLambda_phy}, $\epsilon \in \mathbb{R}_{> 0}$ quantifies the relative deviation of the parameters in $\theta_{\textup{phy}}$ for which $V_{\textup{reg}}$ becomes $V_{\textup{MSE}}$ achieved by the physical model. 
    \item Choose $\Lambda_{\textup{NN}} = \lambda I$, where $\lambda \in \mathbb{R}_{\geq 0}$ can be tuned using, e.g., the L--curve~\cite{Hansen1993}, see Fig.~\ref{fig:L_curve}. Training the (PG)NN for a new value of $\lambda$ can be warm started with the trained parameters $\hat{\theta}$ resulting from the previous $\lambda$, which drastically reduces the computational burden. The parameter $\lambda$ can also be tuned using other, user--preferred, approaches, see, e.g.,~\cite{Bergstra2012}.
    \item $\gamma \in [0, 1]$ to quantify the relative importance of $Z^E$ with respect to $Z^N$. 
\end{enumerate}

\begin{figure}
    \centering
    \includegraphics[width=0.7\linewidth]{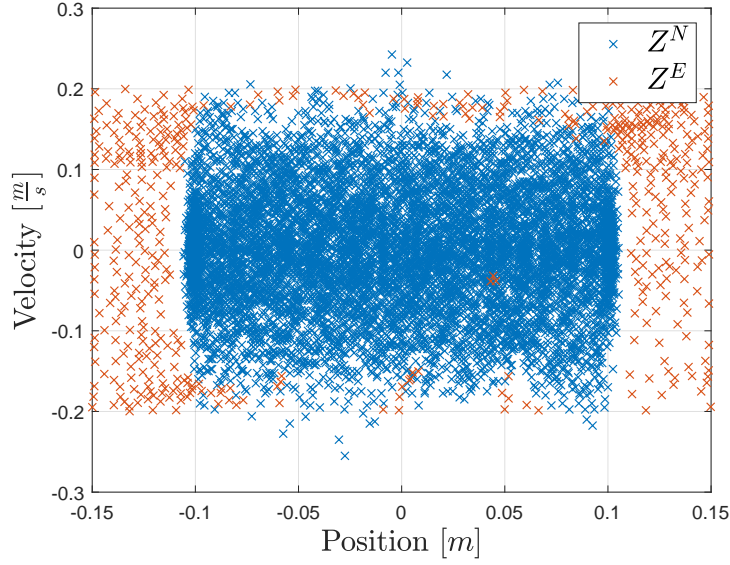}
    \caption{Two--dimensional illustration of $\phi_i$ in $Z^N$ generated on the CLM as discussed in Sec.~\ref{sec:Validation}, and the regressor points $\phi_i^E$ in $Z^E$ generated by Algorithm~\ref{alg:ZE}.}
    \label{fig:ExtrapolationDataSet}
\end{figure}
Let us consider again the CLM described by~\eqref{eq:CLM_ContinuousTime}, for which the operating conditions $\mathcal{R}$ are defined by a maximum on the position $| y(k) | < 0.15$~$m$ and velocity $| \delta y(k) | < 0.2$~$\frac{m}{s}$, with discrete--time differential operator $\delta = \frac{q-q^{-1}}{2T_s}$. 
We neglect the acceleration for the sake of simplicity.
The data set $Z^N$ does not cover the full range of $\mathcal{R}$, see Fig.~\ref{fig:ExtrapolationDataSet}, since it was deemed unsafe to travel the full stroke during the data generating experiment which included a dithering signal on the input. 
Consequently, application of Algorithm~\ref{alg:ZE} generates the set $Z^E$, see the orange crosses in Fig.~\ref{fig:ExtrapolationDataSet}, such that $Z^N \cup Z^E$ covers the operating conditions $\mathcal{R}$.

\begin{figure}
    \centering
    \begin{subfigure}{0.49\linewidth}
        \includegraphics[width=1\linewidth]{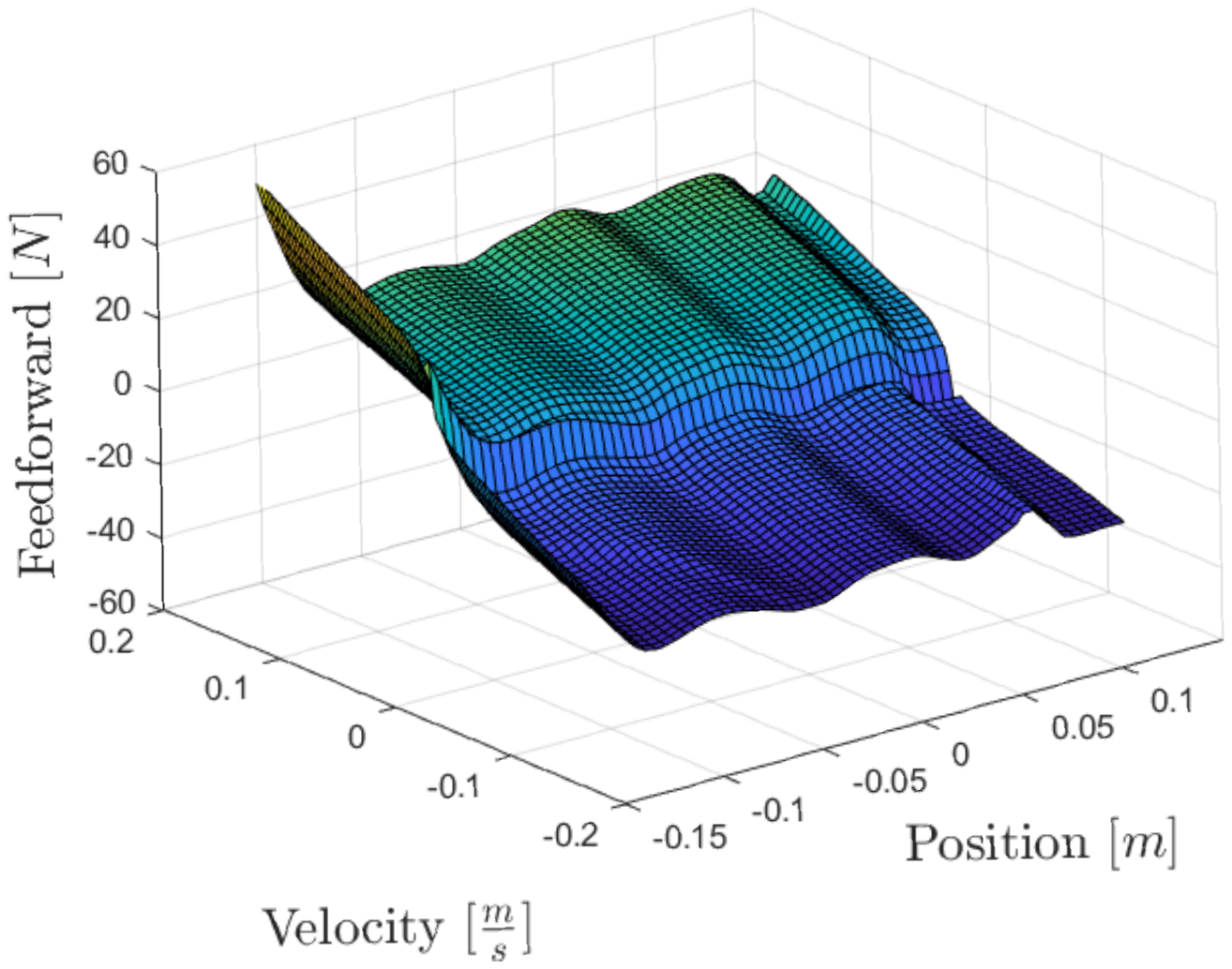}
        \caption{PGNN model with $\gamma = 0$.}
        \label{fig:FrictionModels_PGNN_A}
    \end{subfigure}
    \begin{subfigure}{0.49\linewidth}
        \includegraphics[width=1\linewidth]{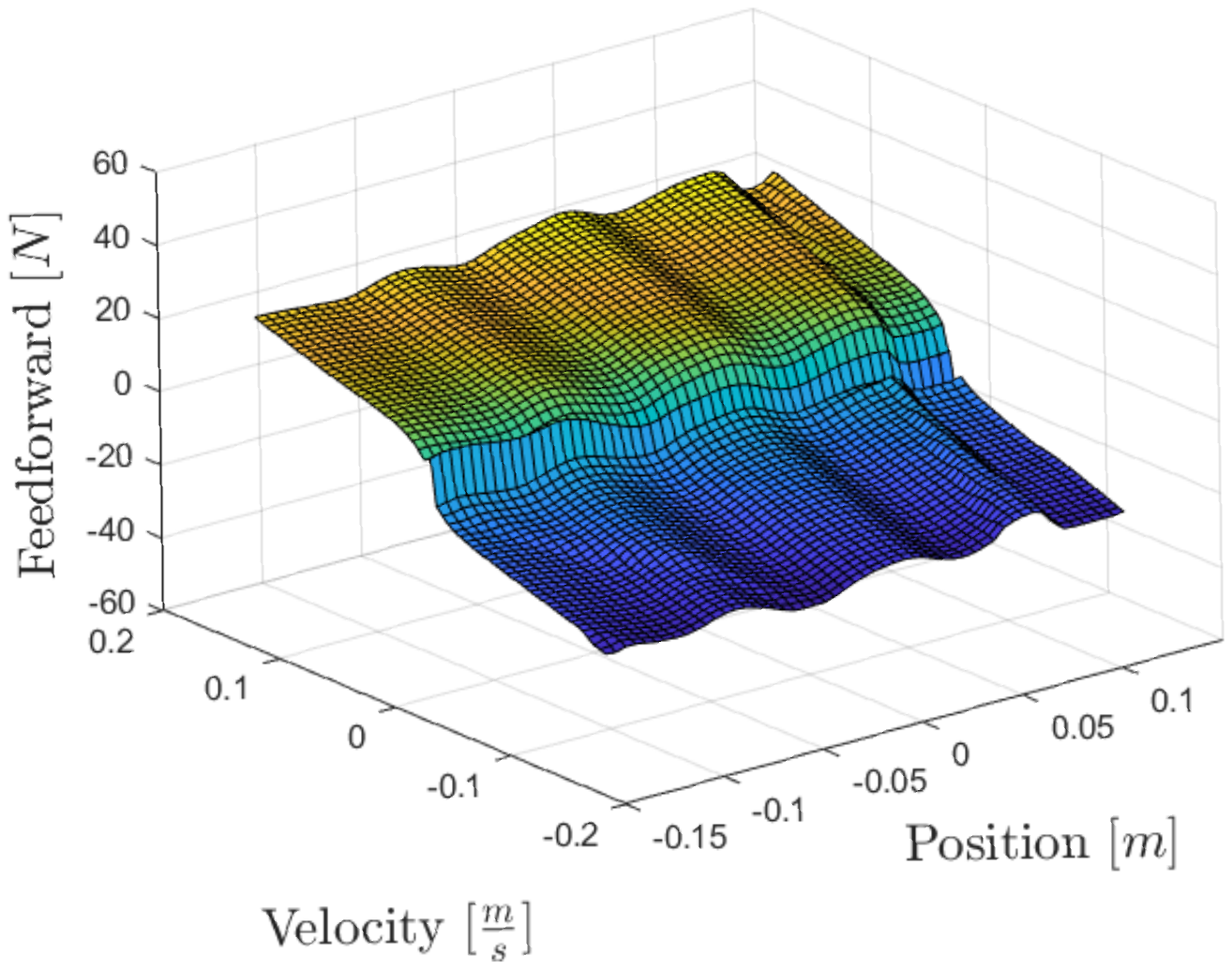}
        \caption{PGNN model with $\gamma = 0.1$.}
         \label{fig:FrictionModels_PGNN_B}
    \end{subfigure}
    \caption{Friction $F_{\textup{fric}}$ of the CLM as in~\eqref{eq:CLM_ContinuousTime} identified by the PGNN~\eqref{eq:PGNNGeneral} according to~\eqref{eq:IdentificationCriterion} with cost function~\eqref{eq:CostFunctionPGNNExtrapolation}, $\Lambda_{\textup{phy}}$ in~\eqref{eq:RuleOfThumbLambda_phy} with $\epsilon =1$, $\Lambda_{\textup{NN}} = 10^{-5} I$ and $\gamma = 0$ (left window) and $\gamma = 0.1$ (right window).}
    \label{fig:FrictionModels_PGNN}
\end{figure}

The effect of the regularization term~\eqref{eq:RegularizationTermPhysics} is demonstrated by training a PGNN according to~\eqref{eq:IdentificationCriterion} with cost function~\eqref{eq:CostFunctionPGNNExtrapolation} on the CLM data set using $\gamma = 0$ and $\gamma = 0.1$, see Sec.~\ref{sec:ValidationRealLife} for the details. 
Following the above guidelines for hyperparameter tuning, we choose $\Lambda_{\textup{phy}}$ as in~\eqref{eq:RuleOfThumbLambda_phy} with $\epsilon = 1$ and $\Lambda_{\textup{NN}} = 10^{-5} I$. 
Fig.~\ref{fig:FrictionModels_PGNN} visualizes the friction models identified by the PGNN. 
Compared to Fig.~\ref{fig:FrictionModels}, which visualizes the friction identified by the physics--based and NN--based model, we observe that the PGNN recovers both the Coulomb friction, as well as the position dependency.
For $\gamma = 0$ in Fig.~\ref{fig:FrictionModels_PGNN_A}, it is observed that the PGNN still suffers when extrapolating, which is diminished for the PGNN trained with $\gamma = 0.1$ in Fig.~\ref{fig:FrictionModels_PGNN_B}.

\section{Input--to--state stability of PGNN feedforward controllers}
\label{sec:Stability}
Stability of a feedforward controller~\eqref{eq:FeedforwardIdentifiedGeneral} is a prerequisite for safe operation of the closed--loop system, i.e., for bounded reference values $r(k)$, the feedforward input $u_{\textup{ff}}(k)$ must remain bounded.
A linear feedforward controller as in~\eqref{eq:LinearFeedforwardController} is stable when the poles of the transfer function $G^{-1}(q)$ are within the unit circle.
However, for the nonlinear PGNN feedforward~\eqref{eq:PGNNGeneral}, stability is determined by the combination of the physical and NN model, which complicates the assessment of stability. 

For the sake of presentation, we consider a PGNN~\eqref{eq:PGNNGeneral} with a linear physical model, i.e., $f_{\textup{phy}} \big( \theta_{\textup{phy}}, \phi(k) \big) = \theta_{\textup{phy}}^T \phi(k)$, and assume that $T\big( \phi(k) \big) = \phi(k)$.
Hence, the PGNN feedforward controller~\eqref{eq:FeedforwardIdentifiedGeneral} is given as
\begin{equation}
    \label{eq:PGNNFeedforwardLinear}
    u_{\textup{ff}}(k) = \hat{\theta}_{\textup{phy}}^T \phi_{\textup{ff}}(k) + f_{\textup{NN}} \big( \hat{\theta}_{\textup{NN}}, \phi_{\textup{ff}}(k) \big). 
\end{equation}
Let $\phi_{\textup{ff}}(k) = [\phi_{r}(k)^T, \phi_{u_{\textup{ff}}}(k)^T]^T$, with $\phi_r(k) := [r(k+n_k+1), ..., r(k+n_k-n_a+1)]^T$ and $\phi_{u_{\textup{ff}}}(k) := [u_{\textup{ff}}(k-1), ..., u_{\textup{ff}}(k-n_b+1)]^T$. 
Similarly, $\hat{\theta}_{\textup{phy}} = [\hat{\theta}_r^T, \hat{\theta}_{u_{\textup{ff}}}^T]^T$, with $\hat{\theta}_r = [I^{(n_a+1)\times(n_a+1)}, 0^{(n_a+1)\times (n_b-1)}] \hat{\theta}_{\textup{phy}}$ and $\hat{\theta}_{u_{\textup{ff}}} = [0^{(n_b-1)\times(n_a+1)}, I^{(n_b-1)\times(n_b-1)}] \hat{\theta}_{\textup{phy}}$. 
We denote $| \cdot |$ as the element--wise absolute value operator, and $\| \cdot \|$ denotes a vector or matrix norm. 
For a matrix $Q$, we denote its maximum and minimum eigenvalue as $\lambda_{\textup{max}} (Q)$ and $\lambda_{\textup{min}}(Q)$, respectively. 
A function $\kappa : \mathbb{R}_{\geq 0} \rightarrow \mathbb{R}_{\geq 0}$ is a $\mathcal{K}$--function if it is continuous, strictly increasing and $\kappa(0) = 0$. It is a $\mathcal{K}_{\infty}$--function if it is a $\mathcal{K}$--function and $\kappa (s) \to \infty$ as $s \to \infty$.

We rewrite the PGNN feedforward~\eqref{eq:PGNNFeedforwardLinear} into the following state--space representation 
\begin{align}
    \begin{split}
        \label{eq:PGNNStateSpace}
        \phi_{u_{\textup{ff}}}(k+1) & = A (\hat{\theta}_{u_{\textup{ff}}} ) \phi_{u_{\textup{ff}}}(k) + B \left( \hat{\theta}_{r}^T \phi_r(k) + f_{\textup{NN}} \left( \hat{\theta}_{\textup{NN}}, \begin{bmatrix} \phi_r(k) \\ \phi_{u_{\textup{ff}}}(k) \end{bmatrix} \right) \right), \\
        u_{\textup{ff}}(k) & = [1, 0^{1 \times (n_b-2)}] \phi_{u_{\textup{ff}}}(k+1),
    \end{split}
\end{align}
where $B = [1, 0^{1 \times (n_b-2)}]^T$, $A(\hat{\theta}_{u_{\textup{ff}}}) = \begin{bmatrix} \hat{\theta}_{u_{\textup{ff}}}^T \\ \begin{matrix} I^{(n_b-2)\times(n_b-2)} & O^{ (n_b-2) \times 1} \end{matrix} \end{bmatrix}$, and $\phi_r(k)$ is the external input. 
As a consequence, we can employ the discrete--time input--to--state stability (ISS) framework defined in~\cite{Jiang2001}. 

If the state--space PGNN feedforward~\eqref{eq:PGNNStateSpace} is ISS with respect to the external input $\phi_r(k+1)$, the following two desirable properties hold:
\begin{enumerate}
    \item $\phi_{u_{\textup{ff}}}(k)$ remains bounded for bounded $\phi_r(k)$ and, consequently, $u_{\textup{ff}}(k)$ remains bounded;
    \item $\phi_{u_{\textup{ff}}}(k) \to 0$ for $\phi_r(k) \to 0$ and, consequently, $u_{\textup{ff}} \to 0$. 
\end{enumerate}

In order to analyze ISS of the PGNN feedforward controller~\eqref{eq:PGNNStateSpace}, we will use an ISS--Lyapunov function, as defined next. 
\begin{definition}
\label{def:ISSLyapunov}
    A function $V: \mathbb{R}^{n} \rightarrow \mathbb{R}_{\geq 0}$ is called an ISS--Lyapunov function for a discrete--time state--space system $\phi_{u_{\textup{ff}}}(k+1) = f \big( \phi_{u_{\textup{ff}}}(k), \phi_r(k+1) \big)$ if the following conditions hold:
    \begin{enumerate}
        \item There exists $\mathcal{K}_{\infty}$--functions $\kappa_1$, $\kappa_2$ such that 
        \begin{equation}
            \label{eq:ISSLyapunov1}
            \kappa_1 \big(\|\phi_{u_{\textup{ff}}}(k)\| \big) \leq V \big( \phi_{u_{\textup{ff}}}(k) \big) \leq \kappa_2 \big( \|\phi_{u_{\textup{ff}}}(k)\| \big), \; \forall \, \phi_{u_{\textup{ff}}}(k) \in \mathbb{R}^{n_b-1}. 
        \end{equation}
        \item There exists a $\mathcal{K}_{\infty}$--function $\kappa_3$ and a $\mathcal{K}$--function $\sigma$, such that
        \begin{align}
        \begin{split}
            \label{eq:ISSLyapunov2}
            V & \big( \phi_{u_{\textup{ff}}}(k+1) \big) - V \big( \phi_{u_{\textup{ff}}}(k) \big) \leq - \kappa_3 (\|\phi_{u_{\textup{ff}}}(k) \| ) + \sigma(\|\phi_r(k+1) \| ) , \\
            & \quad \forall \; \phi_{u_{\textup{ff}}}(k) \in \mathbb{R}^{n_b-1}, \; \forall \; \phi_r(k) \in \mathbb{R}^{n_a+1}. 
        \end{split}
        \end{align}
    \end{enumerate}
\end{definition}

\begin{remark}
\label{re:LipschitzNN}
    For all commonly applied activation functions, the NN~\eqref{eq:NNParametrization} has bounded partial derivatives, i.e., there exists an $K \in \mathbb{R}^{n_a+n_b}$ such that
    \begin{equation}
        \label{eq:NNLipschitz}
        \left| f_{\textup{NN}} \big( \hat{\theta}_{\textup{NN}}, \phi_{\textup{ff}}^A (k) \big) - f_{\textup{NN}} \big( \hat{\theta}_{\textup{NN}}, \phi_{\textup{ff}}^B(k) \big) \right| \leq K^T \left| \phi_{\textup{ff}}^A(k) - \phi_{\textup{ff}}^B (k) \right|. 
    \end{equation}
    Using backpropagation, it is possible to find values for $K$, e.g.,
    \begin{align}
    \begin{split}
        \label{eq:NNLipschitzValue}
        K^T & = \max_{\phi_{\textup{ff}}(k)} \frac{\partial f_{\textup{NN}} \big( \hat{\theta}_{\textup{NN}}, \phi_{\textup{ff}}(k) \big) }{\partial \phi_{\textup{ff}}(k) } \\
        & = \max_{\phi_{\textup{ff}}(k)} \hat{W}_{L+1} \textup{diag} (\alpha_l') \hdots \hat{W}_2 \textup{diag} ( \alpha_1') \hat{W}_1 \\
        & \leq  | W_{L+1} | \; \Pi_{l=1}^{L} | \hat{W}_l | \;  \max_{\phi_{\textup{ff}}(k)} \big( \textup{diag} (\alpha_i') \big) = \Pi_{l=1}^{L+1} | W_l|,
    \end{split}
    \end{align}
    where $\alpha_l' = \frac{\partial \alpha_l(x)}{\partial x} \big|_{x = x_{l}}$, with $x_l$ the input to NN layer $l$.
    The final term is obtained by substition of $\max_{\phi_{\textup{ff}}(k)} \big( \textup{diag} (\alpha_i') \big) = I$, which holds for $\tanh$, ReLU, and several other activation functions.
    In addition, we let $K_{r} := [I^{(n_a+1)\times(n_a+1)}, 0^{(n_a+1) \times (n_b-1)}] K$, and $K_{u_{\textup{ff}}} := [0^{(n_b-1)\times(n_a+1)}, I^{(n_b-1)\times(n_b-1)}]K$, such that $K = [K_r^T, K_{u_{\textup{ff}}}^T]^T$.  
\end{remark}

\begin{remark}
    A global ISS result for black--box NNs was derived in~\cite{Bonassi2021} by using a scalar Lipschitz constant.
    We reduce conservatism by retaining $K \in \mathbb{R}^{n_a+n_b}$, since, as it turns out, ISS is determined by $K_{u_{\textup{ff}}}$. This results in less conservative Lipschitz bounds, which apply to black--box NNs as well.
\end{remark}

\begin{assumption}
\label{as:EquilibriumOrigin}
    The origin $\phi_{u_{\textup{ff}}}(k) = 0$ is an equilibrium for the unexcited PGNN~\eqref{eq:PGNNStateSpace}, i.e., $\phi_r(k) = 0$, such that~\eqref{eq:PGNNStateSpace} gives
    \begin{equation}
        \label{eq:EquilibriumOrigin}
        f_{\textup{NN}} \big(\hat{\theta}_{\textup{NN}}, 0 \big) = 0. 
    \end{equation}
\end{assumption}
Note that, we can introduce a coordinate transformation, e.g., $\zeta(k) = \phi_{u_{\textup{ff}}}(k) + \varepsilon$ to satisfy Assumption~\ref{as:EquilibriumOrigin}. 
\begin{assumption}
\label{as:StablePhysicalModel}
    There exists $P \succ 0$ such that \[Q := P - A(\hat{\theta}_{u_{\textup{ff}}})^T P A (\hat{\theta}_{u_{\textup{ff}}}) \succ 0.\] 
\end{assumption}
Assumption~\ref{as:StablePhysicalModel} requires that the parameters $\hat{\theta}_{u_{\textup{ff}}}$ are such that the physical model is stable, i.e., $A(\hat{\theta}_{u_{\textup{ff}}})$ is a Schur matrix. 
A $(P,Q)$ pair satisfying Assumption~\ref{as:StablePhysicalModel} is typically obtained by choosing $Q \succ 0$ and solving the discrete--time Lyapunov equation to obtain $P \succ 0$.

\begin{theorem}[PGNN feedforward ISS]
\label{th:StabilityPGNN}
    Consider the PGNN feedforward controller~\eqref{eq:PGNNFeedforwardLinear}, and its state--space representation~\eqref{eq:PGNNStateSpace}.
    Suppose that Assumptions~\ref{as:EquilibriumOrigin} and~\ref{as:StablePhysicalModel} hold. Let $(P,Q)$ satisfy Assumption~\ref{as:StablePhysicalModel}, and define
    \begin{equation}
    \label{eq:StabilityPGNNConditionVariable}
        c_{\beta} := B^T P \left( I + \frac{1}{\beta \lambda_{\textup{min}}(Q)} A(\hat{\theta}_{u_{\textup{ff}}}) A(\hat{\theta}_{u_{\textup{ff}}})^T P  \right) B. 
    \end{equation}
    Then, if there exists a $\beta > 0$ such that
    \begin{equation}
        \label{eq:StabilityPGNNCondition}
        K_{u_{\textup{ff}}}^T K_{u_{\textup{ff}}} < \frac{(1-\beta) \lambda_{\textup{min}} (Q)}{c_{\beta} },
    \end{equation}
    the PGNN feedforward state--space representation~\eqref{eq:PGNNStateSpace} is ISS.
\end{theorem}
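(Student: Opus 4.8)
The plan is to exhibit a quadratic ISS--Lyapunov function and verify the two conditions of Definition~\ref{def:ISSLyapunov}. I would take the candidate $V(\phi_{u_{\textup{ff}}}) = \phi_{u_{\textup{ff}}}^T P \phi_{u_{\textup{ff}}}$, with $P \succ 0$ from Assumption~\ref{as:StablePhysicalModel}. Then~\eqref{eq:ISSLyapunov1} is immediate with $\kappa_1(s) = \lambda_{\textup{min}}(P) s^2$ and $\kappa_2(s) = \lambda_{\textup{max}}(P) s^2$, both $\mathcal{K}_{\infty}$. Writing $x := \phi_{u_{\textup{ff}}}(k)$ and $A := A(\hat{\theta}_{u_{\textup{ff}}})$, the one--step increment expands, using $Q = P - A^T P A$, as
\begin{equation*}
V \big( \phi_{u_{\textup{ff}}}(k+1) \big) - V(x) = -x^T Q x + 2 (x^T A^T P B) w + (B^T P B) w^2,
\end{equation*}
where $w := \hat{\theta}_r^T \phi_r(k) + f_{\textup{NN}} \big( \hat{\theta}_{\textup{NN}}, [\phi_r(k)^T, x^T]^T \big)$ is the scalar input term.

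The crucial step is to separate the part of $w$ that depends on the state $x$ from the part that depends on the reference $\phi_r$. Using Assumption~\ref{as:EquilibriumOrigin}, i.e., $f_{\textup{NN}}(\hat{\theta}_{\textup{NN}}, 0) = 0$, together with the component--wise Lipschitz bound~\eqref{eq:NNLipschitz} from Remark~\ref{re:LipschitzNN}, I would write $w = \delta + d$, where $\delta := f_{\textup{NN}} \big( \hat{\theta}_{\textup{NN}}, [0^T, x^T]^T \big)$ satisfies $|\delta| \leq K_{u_{\textup{ff}}}^T |x|$, hence $\delta^2 \leq (K_{u_{\textup{ff}}}^T K_{u_{\textup{ff}}}) \|x\|^2$, while $d := \hat{\theta}_r^T \phi_r(k) + \big( f_{\textup{NN}}(\hat{\theta}_{\textup{NN}}, [\phi_r(k)^T, x^T]^T) - f_{\textup{NN}}(\hat{\theta}_{\textup{NN}}, [0^T, x^T]^T) \big)$ is bounded, via~\eqref{eq:NNLipschitz} applied to the reference coordinates, by a $\mathcal{K}$--function of $\|\phi_r(k)\|$ (namely $|d| \leq (\|\hat{\theta}_r\| + \|K_r\|) \|\phi_r(k)\|$). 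This isolation of $K_{u_{\textup{ff}}}$ is exactly the less conservative ingredient advertised in the preceding remarks.

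I would then treat the \emph{nominal} terms (those containing only $x$ and $\delta$) to build the decrease $-\kappa_3$, and the remaining cross terms (those containing $d$) to build $\sigma$. For the nominal cross term I would apply Young's inequality $2(x^T A^T P B)\delta \leq \mu (x^T A^T P B)^2 + \mu^{-1}\delta^2$ and bound $(x^T A^T P B)^2 \leq (B^T P A A^T P B)\|x\|^2$ using its rank--one structure; choosing the specific value $\mu = \beta \lambda_{\textup{min}}(Q)/(B^T P A A^T P B)$ makes the $x$--contribution equal to $\beta \lambda_{\textup{min}}(Q)\|x\|^2$ and collapses the coefficient of $\delta^2$ precisely into $c_{\beta}$ as defined in~\eqref{eq:StabilityPGNNConditionVariable}. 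Combining with $-x^T Q x \leq -\lambda_{\textup{min}}(Q)\|x\|^2$ and $\delta^2 \leq (K_{u_{\textup{ff}}}^T K_{u_{\textup{ff}}})\|x\|^2$ yields a nominal bound $-[(1-\beta)\lambda_{\textup{min}}(Q) - c_{\beta} K_{u_{\textup{ff}}}^T K_{u_{\textup{ff}}}]\|x\|^2$, whose bracket is strictly positive exactly under~\eqref{eq:StabilityPGNNCondition}. The leftover terms $2(x^T A^T P B)d$, $2(B^T P B)\delta d$ and $(B^T P B)d^2$ I would split by further Young's inequalities with small weights, absorbing their (arbitrarily small) $\|x\|^2$ parts into the strictly positive nominal decrease and collecting their $d^2$ parts, together with the $\mathcal{K}$--bound on $d$, into a single $\mathcal{K}$--function $\sigma(\|\phi_r(k)\|)$. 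This establishes~\eqref{eq:ISSLyapunov2}, so $V$ is an ISS--Lyapunov function and ISS follows from~\cite{Jiang2001}.

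I expect the main obstacle to be the bookkeeping in the Young's inequality step: one must choose the single parameter $\mu$ so that the state--dependent residual collapses \emph{exactly} to the expression $c_{\beta}$ in~\eqref{eq:StabilityPGNNConditionVariable}, and then verify that the leftover state--reference coupling can be absorbed without destroying the strict decrease guaranteed by~\eqref{eq:StabilityPGNNCondition}. The conceptual crux, however, is the component--wise Lipschitz decomposition that lets the stability condition depend only on $K_{u_{\textup{ff}}}$ rather than on the full gain $K$.
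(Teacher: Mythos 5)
Your proposal is correct and follows essentially the same route as the paper: the quadratic ISS--Lyapunov function $\phi_{u_{\textup{ff}}}^T P \phi_{u_{\textup{ff}}}$, Young's inequality with the specific weight $\beta \lambda_{\textup{min}}(Q)$ that collapses the state--dependent residual exactly into $c_{\beta}$, the component--wise Lipschitz bound isolating $K_{u_{\textup{ff}}}$, and arbitrarily small auxiliary weights to absorb the state--reference cross terms under the strict inequality~\eqref{eq:StabilityPGNNCondition}. The only difference is bookkeeping: you split the input term into a state--only part $\delta$ plus a reference--dependent remainder $d$ before completing the squares, whereas the paper bounds the full term first and then decomposes $|f_{\textup{NN}}| \leq K_r^T |\phi_r| + K_{u_{\textup{ff}}}^T |\phi_{u_{\textup{ff}}}|$ with two further square completions — both yield the same condition.
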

\begin{proof}
    See~\ref{app:ThPGNNStability}. 
\end{proof}

The value of $\beta$ for which the right--hand side in~\eqref{eq:StabilityPGNNCondition} is maximal, is 
\begin{equation}
    \label{eq:OptimalChoiceForAlpha}
    \beta = \frac{- B^T P A A^T P B + \sqrt{B^T P \big( \lambda_{\textup{min}}(Q) I + A A^T P \big) B B^T P A A^T P B}}{\lambda_{\textup{min}}(Q) B^T P B}. 
\end{equation}
Note that, since $B$ is a column, the square root and division are scalar operations. 
Eq.~\eqref{eq:OptimalChoiceForAlpha} is obtained by setting the derivative w.r.t. $\beta$ of the right hand side of~\eqref{eq:StabilityPGNNCondition} equal to zero, and choosing the option for which $\beta > 0$. 

\begin{remark}
    A similar result as in Theorem~\ref{th:StabilityPGNN} can be obtained for a PGNN with a general nonlinear physical model when a quadratic Lyapunov function is available for the physical model. 
    Similarly, the transform $T(\cdot )$ can be included by including it for computation of $K$ in~\eqref{eq:NNLipschitzValue}. 
\end{remark}

The ISS condition~\eqref{eq:StabilityPGNNCondition} in Theorem~\ref{th:StabilityPGNN} can be validated \emph{after} training using $\beta$ as in~\eqref{eq:OptimalChoiceForAlpha}, the upperbound of $K_{u_{\textup{ff}}} = [0^{(n_b-1)\times (n_a+1)}, I^{(n_b-1)\times (n_b-1)}] K$ with $K$ in~\eqref{eq:NNLipschitzValue} for some pair $(P, Q)$. 
With the aim to ensure \emph{before} training that the PGNN is ISS, we fix $\hat{\theta}_{r} = \theta_r^*$, and constraint the network weights to satisfy the ISS condition~\eqref{eq:StabilityPGNNCondition}. 
\begin{lemma}[Training imposed ISS]
\label{le:StabilityAPrioriWithAssumption}
    Consider the PGNN feedforward controller with linear physical model, such that it admits a state--space representation of the form~\eqref{eq:PGNNStateSpace}.
    Suppose that Assumption~\ref{as:EquilibriumOrigin} holds, that a $(P,Q)$ pair satisfying Assumption~\ref{as:StablePhysicalModel} is available, and choose $\beta$ as in~\eqref{eq:OptimalChoiceForAlpha}.
    Define the set 
    \begin{align} 
    \begin{split}
        \label{eq:SetOptimization}
        \Theta := \Bigg\{& \theta \in \mathbb{R}^{n_{\theta}} \; \Bigg| \;  \big( \theta_r = \theta_r^* \big) \; \wedge  \\
        & \left( \left\| \, \big( \Pi_{l=1}^{L+1} | W_l |\big) \begin{bmatrix}  0^{(n_a+1)\times(n_b-1)} \\  I^{(n_b-1)\times(n_b-1)} \end{bmatrix} \, \right\|_2^2 < \frac{(1-\beta) \lambda_{\textup{min}}(Q)}{c_{\beta}}\right) \Bigg\}, 
    \end{split}
    \end{align}
    and train the PGNN according to identification criterion
    \begin{align}
    \begin{split}
        \label{eq:IdentificationCriterionStable}
        \hat{\theta} = \textup{arg} \min_{\theta \; \in \; \Theta} V \big( \theta, Z^N \big).
    \end{split}
    \end{align}
    Then, the training returns an ISS PGNN feedforward controller~\eqref{eq:FeedforwardIdentifiedGeneral},~\eqref{eq:PGNNGeneral}.
\end{lemma}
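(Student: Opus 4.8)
The plan is to derive the lemma as a direct corollary of Theorem~\ref{th:StabilityPGNN}: the admissible set $\Theta$ in~\eqref{eq:SetOptimization} is constructed precisely so that \emph{every} feasible parameter already satisfies the hypotheses of that theorem, so ISS holds in particular for the minimizer $\hat\theta$ returned by~\eqref{eq:IdentificationCriterionStable}. First I would note that, being an argument of a minimization over $\Theta$, the returned $\hat\theta$ is feasible, i.e.\ $\hat\theta \in \Theta$. I would then check the standing hypotheses of Theorem~\ref{th:StabilityPGNN} for this $\hat\theta$: Assumption~\ref{as:EquilibriumOrigin} is assumed in the lemma, and a pair $(P,Q)$ satisfying Assumption~\ref{as:StablePhysicalModel} is available by assumption (which presupposes $\hat\theta_{u_\textup{ff}}$, and hence $A(\hat\theta_{u_\textup{ff}})$, is held fixed so that the same $(P,Q)$ remains valid throughout training). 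Consequently $A(\hat\theta_{u_\textup{ff}})$, $P$, $Q$ and the scalar $c_\beta$ in~\eqref{eq:StabilityPGNNConditionVariable} are well defined, and $\beta$ is taken as the maximizer~\eqref{eq:OptimalChoiceForAlpha}, which is positive and in fact lies in $(0,1)$, so the right--hand side of~\eqref{eq:StabilityPGNNCondition} is strictly positive.

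The crux of the argument is to translate the membership constraint of~\eqref{eq:SetOptimization}, which is phrased through the Lipschitz \emph{upper bound} $\Pi_{l=1}^{L+1}|W_l|$, into the ISS condition~\eqref{eq:StabilityPGNNCondition}, which is phrased through $K_{u_\textup{ff}}$. From Remark~\ref{re:LipschitzNN} the componentwise, nonnegative bound $0 \le K^T \le \Pi_{l=1}^{L+1}|W_l|$ holds, and applying the selection $[0^{(n_b-1)\times(n_a+1)},\,I^{(n_b-1)\times(n_b-1)}]$ that defines $K_{u_\textup{ff}}$ preserves this domination on the corresponding $\phi_{u_\textup{ff}}$--block. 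Since both sides are then nonnegative vectors and the squared Euclidean norm is monotone under componentwise domination of nonnegative vectors (and the norm of a vector equals that of its transpose), I obtain
\[
K_{u_\textup{ff}}^T K_{u_\textup{ff}} = \|K_{u_\textup{ff}}\|_2^2 \le \left\| \Big( \Pi_{l=1}^{L+1} |W_l| \Big) \begin{bmatrix} 0^{(n_a+1)\times(n_b-1)} \\ I^{(n_b-1)\times(n_b-1)} \end{bmatrix} \right\|_2^2 < \frac{(1-\beta)\lambda_{\textup{min}}(Q)}{c_\beta},
\]
where the final strict inequality is exactly the defining constraint of $\Theta$. Hence condition~\eqref{eq:StabilityPGNNCondition} is met for the fixed $\beta$.

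With the premises verified and~\eqref{eq:StabilityPGNNCondition} established, I would invoke Theorem~\ref{th:StabilityPGNN} to conclude that the state--space representation~\eqref{eq:PGNNStateSpace} associated with $\hat\theta$ is ISS with respect to $\phi_r$; by the bounded--state/bounded--input and convergence properties this means $u_\textup{ff}(k)$ stays bounded for bounded $\phi_r(k)$ and $u_\textup{ff}(k)\to 0$ as $\phi_r(k)\to 0$, i.e.\ the trained feedforward controller~\eqref{eq:FeedforwardIdentifiedGeneral},~\eqref{eq:PGNNGeneral} is ISS. To ensure the statement is non--vacuous I would also remark that $\Theta$ is nonempty: taking, e.g., $W_{L+1}=0$ together with $\theta_r=\theta_r^*$ yields a zero Lipschitz bound, which satisfies the strict inequality because its right--hand side is positive, so the minimization~\eqref{eq:IdentificationCriterionStable} is well posed.

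I expect the monotonicity step bridging the two norms to be the only genuinely delicate point: it is legitimate solely because the derivative/Lipschitz bounds in~\eqref{eq:NNLipschitzValue} are genuinely nonnegative (being assembled from $|W_l|$ and $\max_{\phi_\textup{ff}}\mathrm{diag}(\alpha_l')$), and because the same selection matrix extracts the $\phi_{u_\textup{ff}}$--components on both sides, so that the componentwise inequality passes cleanly to the squared norms. The remaining care is bookkeeping: one must confirm that fixing $\hat\theta_{u_\textup{ff}}$ (hence $A$) during training is what keeps the supplied $(P,Q)$, and therefore $c_\beta$ and $\beta$, valid for Assumption~\ref{as:StablePhysicalModel} at the returned $\hat\theta$.
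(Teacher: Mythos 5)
Your proof is correct and follows essentially the same route as the paper: the paper's own proof simply invokes Theorem~\ref{th:StabilityPGNN} and observes that membership in $\Theta$ forces condition~\eqref{eq:StabilityPGNNCondition} via the Lipschitz bound~\eqref{eq:NNLipschitzValue}. You merely make explicit the details the paper leaves implicit, namely the componentwise domination $0 \leq K_{u_{\textup{ff}}} \leq \big(\Pi_{l=1}^{L+1}|W_l|\big)[0;I]$ passing to the squared Euclidean norms, the non--emptiness of $\Theta$, and the (fair) caveat that $\theta_{u_{\textup{ff}}}$ must stay fixed for the supplied $(P,Q)$ to remain valid.
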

\begin{proof}
    See~\ref{app:leStability1}. 
\end{proof}

In the remainder of this section, we consider the situation in which Assumption~\ref{as:StablePhysicalModel} is violated, i.e., $A(\hat{\theta}_{u_{\textup{ff}}})$ is not Schur.
To obtain a stable PGNN feedforward controller, we take inspiration from the stable inversion techniques for linear systems as in~\cite{Zundert2018}, which apply either a non--causal feedforward controller design, or a stable approximate inversion such as ZPETC, ZMETC, or NPZ--ignore. 

We define $n_{\textup{us}} \in \mathbb{Z}_{\geq 0}$ as the number of unstable eigenvalues of $A(\theta_{u_{\textup{ff}}}^*)$. 
Subsequently, we adjust the PGNN~\eqref{eq:PGNNFeedforwardLinear} to include a preview window of $n_{\textup{pw}} \in \mathbb{Z}_{\geq 0}$ while reducing the number of past inputs by $n_{\textup{us}}$, such that
\begin{align}
\begin{split}
    \label{eq:PGNNExtendedPreviewWindow}
    \hat{u} \big( \theta, \tilde{\phi}(k) \big) & = [\theta_{r}^T, \theta_{u_{\textup{ff}}}^T] \begin{bmatrix} \tilde{\phi}_y(k) \\ \tilde{\phi}_u(k) \end{bmatrix} + f_{\textup{NN}} \left(\theta_{\textup{NN}}, \begin{bmatrix} \tilde{\phi}_y(k) \\ \tilde{\phi}_u(k) \end{bmatrix} \right), \\
    \tilde{\phi}_y(k) & := [y(k+n_k+n_{\textup{pw}} +1), ..., y(k+n_k-n_a+1)]^T, \\
    \tilde{\phi}_u(k) & := [u(k-1), ..., u(k-n_b+n_{\textup{us}}+1)]^T.
\end{split}
\end{align}
After identification of the parameters $\hat{\theta}$ of the PGNN with extended preview~\eqref{eq:PGNNExtendedPreviewWindow} and using it for feedforward control, it can be rewritten into state--space representation, similar as was done in~\eqref{eq:PGNNStateSpace} for the original PGNN~\eqref{eq:PGNNFeedforwardLinear}, such that we obtain
\begin{align}
\begin{split}
    \label{eq:PGNNExtendedPreviewWindowStateSpace}
    \tilde{\phi}_{u_{\textup{ff}}} (k+1) & = \tilde{A}(\hat{\theta}_{u_{\textup{ff}}}) \tilde{\phi}_{u_{\textup{ff}}} (k) + \tilde{B} \left( \hat{\theta}_r^T \tilde{\phi}_r(k) +  f_{\textup{NN}} \left( \hat{\theta}_{\textup{NN}}, \begin{bmatrix} \tilde{\phi}_r(k) \\ \tilde{\phi}_{u_{\textup{ff}}}(k) \end{bmatrix} \right) \right), \\
    u_{\textup{ff}}(k) & = [1, 0^{1 \times (n_b-2-n_{\textup{us}})}] \tilde{\phi}_{u_{\textup{ff}}}(k+1). 
\end{split}
\end{align}

For the PGNN feedforward controller with extended preview window~\eqref{eq:PGNNExtendedPreviewWindowStateSpace}, we impose ISS during the PGNN identification following any of the next approaches:
\begin{enumerate}
    \item \emph{Complete retraining:} re--identify the linear part $\theta_{\textup{phy}}^* = [{\theta_{r}^*}^T, {\theta_{u_{\textup{ff}}}^*}^T]$ according to~\eqref{eq:CostFunctionMSE} with $n_{\textup{pw}}$ sufficiently large to have $\tilde{A}(\theta_{u_{\textup{ff}}}^*)$ satisfy Assumption~\ref{as:StablePhysicalModel}, and re--identify $\hat{\theta}$ according to Lemma~\ref{le:StabilityAPrioriWithAssumption};
    \item \emph{Partial retraining:} fix $\theta_{u_{\textup{ff}}}^*$ to have $\tilde{A}(\theta_{u_{\textup{ff}}}^*)$ retain the stable eigenvalues of $A(\theta_{u_{\textup{ff}}}^*)$ corresponding to the original PGNN~\eqref{eq:PGNNFeedforwardLinear}, and re--identify $\hat{\theta}$ according to Lemma~\ref{le:StabilityAPrioriWithAssumption} without any conditions on $n_{\textup{pw}}$;
    \item \emph{Stable approximation:} train $\hat{\theta}$ for the original PGNN~\eqref{eq:PGNNFeedforwardLinear} while ensuring that, after application of a stable approximation method to the linear part, the PGNN~\eqref{eq:PGNNExtendedPreviewWindow} is ISS following Lemma~\ref{le:StabilityAPrioriWithAssumption} with $\Theta$ in~\eqref{eq:SetOptimization} adjusted accordingly.
\end{enumerate}
The first approach has the most flexibility during training, but requires a sufficiently large $n_{\textup{pw}}$, which typically depends on the location of the unstable eigenvalues of $A(\hat{\theta}_{u_{\textup{ff}}})$ of the original PGNN~\eqref{eq:PGNNFeedforwardLinear}. 
On the other hand, the second and third approach guarantee stability for any $n_{\textup{pw}}$, where the second relies on the data to find a stable inverse, while the third relies on the stable approximation method.

\section{Experimental and simulation validation}
\label{sec:Validation}
The (PG)NNs in this section are trained using MATLAB's ``lsqnonlin" optimization tool with Levenberg--Marquardt and dedicated functions for the (PG)NNs.
All (PG)NNs have a single hidden layer with $\tanh$--activation function, and the number of neurons $n_1$ are found by performing an unregularized training, i.e., minimizing~\eqref{eq:CostFunctionMSE}, for increasing number of neurons $n_1$ until the cost function no longer decreased significantly.
The data sets are randomly split into $70$$\%$ training data which is used for training, and $30$$\%$ validation data which is used only for early--stopping during training when overfitting is detected. 
We train each (PG)NN $10$ times with random (with input normalization) initialization of $\theta_{\textup{NL}}^{(0)}$ and optimal initialization~\eqref{eq:OptimizedInitialization} for $\theta_{\textup{L}}^{(0)}$ and select the training that reached the smallest value of the cost function evaluated over the validation data.

\begin{figure}
    \centering
    \includegraphics[width=1.0\linewidth]{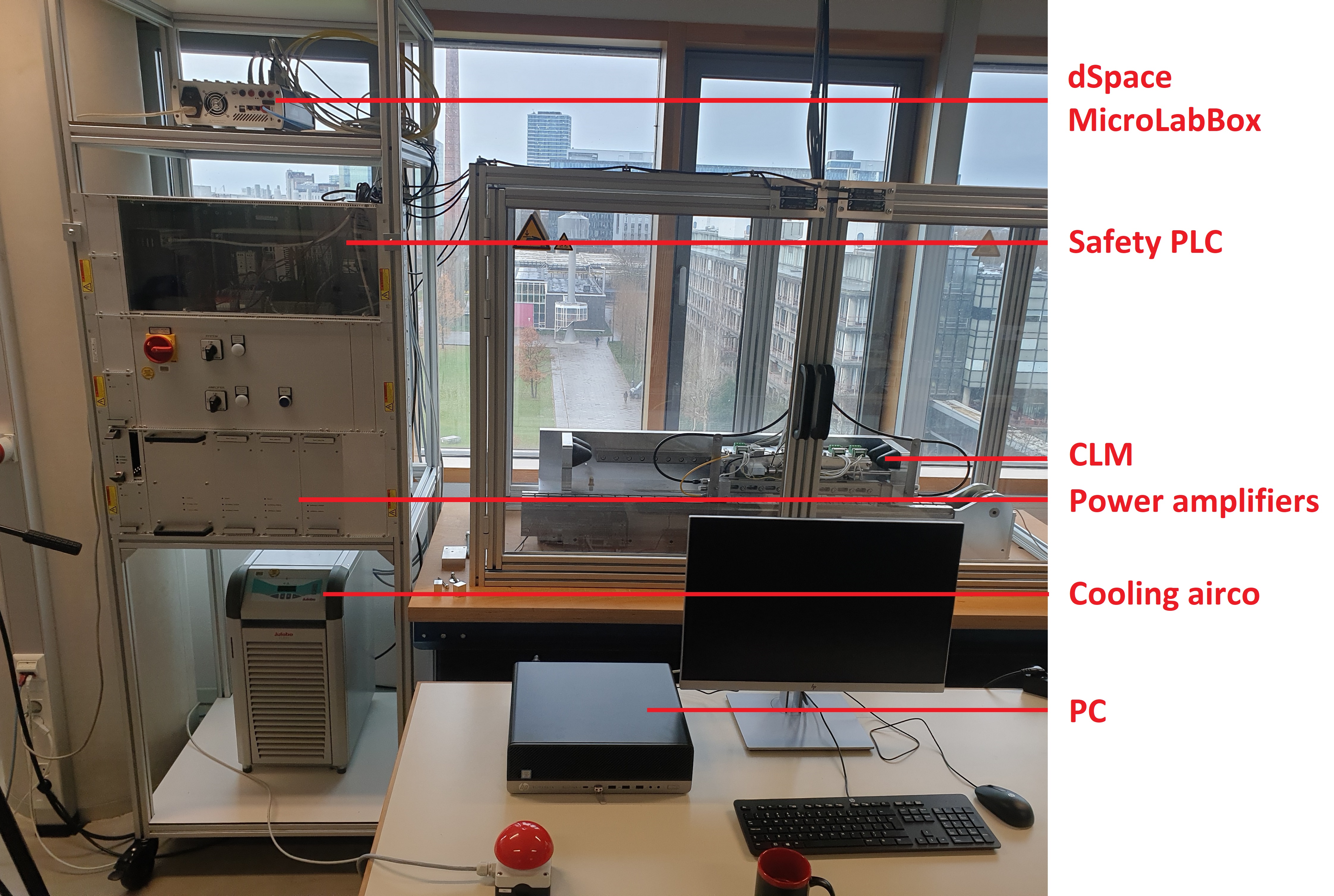}
    \includegraphics[width=1.0\linewidth]{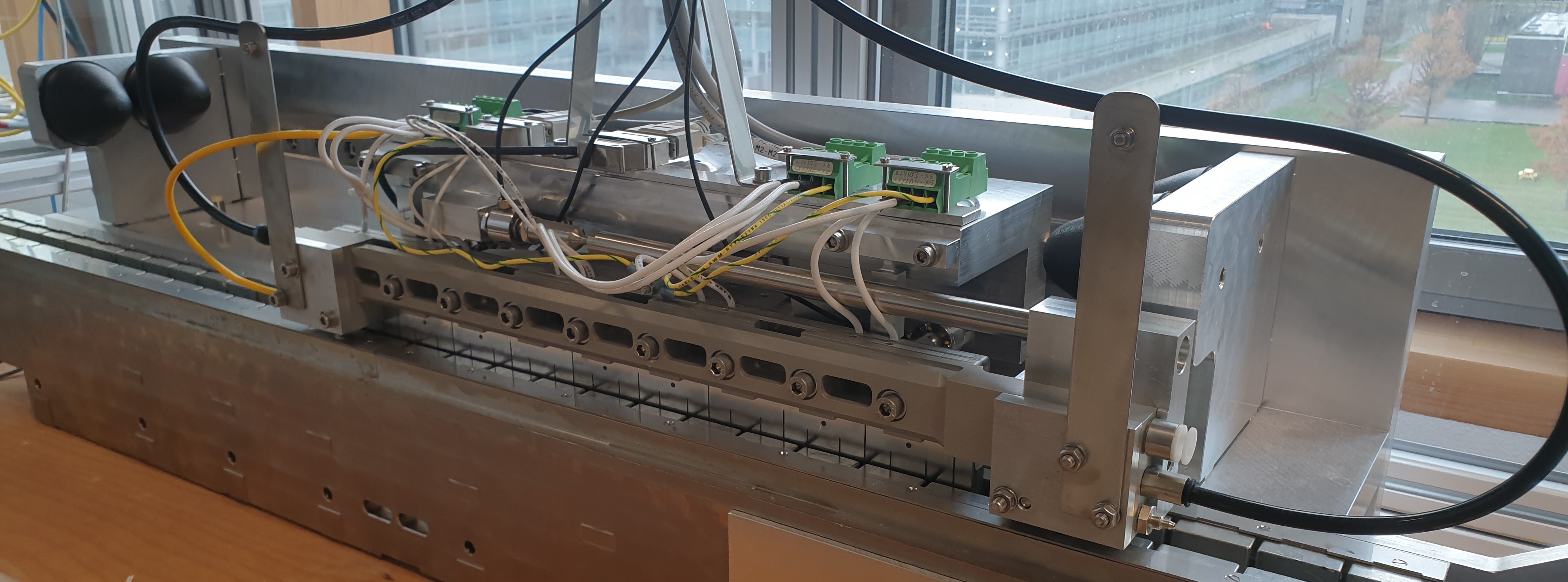}
    \caption{Complete industrial coreless linear motor setup (top window), with an enlarged view of the coreless linear motor (bottom window).}
    \label{fig:CLM}
\end{figure}

\subsection{Real--life industrial coreless linear motor}
\label{sec:ValidationRealLife}
\emph{\textbf{Experimental setup:}} we consider the problem of position control of the industrial coreless linear motor (CLM) displayed in Fig.~\ref{fig:CLM}, which was formerly part of the longstroke actuation in a lithography machine. For research purposes, the CLM is limited to exert forces up to $500$ $N$. 
The CLM contains three coil sets, each consisting of three coils connected in star configuration that are powered by a three--phase power amplifier. 
The system is controlled by a dSPACE MicroLabBox that receives encoder position measurements with an accuracy of $5 \cdot 10^{-6}$ $m$, computes the control input, and converts them into current setpoints via a commutation algorithm to be send to the power amplifiers. 
The PC is used to program software in MATLAB/Simulink that is uploaded to the dSPACE MicroLabBox. 
Relevant signals are accessed during run--time using the dSPACE ControlDesk software. 
The coils are cooled using water and the cooling unit.
Finally, the safety programmable logic controller (PLC) cuts of the power to the amplifiers when safety is at risk, e.g., when the stop button is pushed, or when the CLM access doors are opened.

\emph{\textbf{System modelling:}} 
a physical model of the CLM displayed in Fig.~\ref{fig:CLM} was derived based on Newton's second law in~\eqref{eq:CLM_ContinuousTime}.
We discretize the system by approximating $\dot{y}(t) \approx \delta y(k)$, with $\delta = \frac{q-q^{-1}}{2T_s}$, $\ddot{y}(t) \approx \delta^2 y(k)$ and approximate the ZOH D2C delay by $\Delta = \frac{q+1}{2}$, such that~\eqref{eq:CLM_ContinuousTime} becomes
\begin{equation}
    \label{eq:CLM_DiscreteTime}
    u(k) = \Delta \left( m \delta^2 y(k) + F_{\textup{fric}} \big( y(k), \delta y(k) \big) \right). 
\end{equation}
By substituting $\delta$ and $\Delta$ and using $q y(k) = y(k+1)$, we observe from~\eqref{eq:CLM_DiscreteTime} that $n_a = 5$, $n_b = 1$ and $n_k = 2$ in $\phi(k)$. 
The CLM is operated in closed--loop by the ZOH discretization of the feedback controller
\begin{equation}
    \label{eq:CLMFeedbackController}
    C(s) = \frac{1.056\cdot 10^{8} s^2 + 2.282\cdot 10^{9} s + 7.884\cdot 10^9}{s^3 + 547.4s^2 + 7.643\cdot 10^4 s - 0.0001669},
\end{equation}
which was manually loopshaped using frequency domain data.

\begin{figure}
    \centering
    \includegraphics[width=0.8\linewidth]{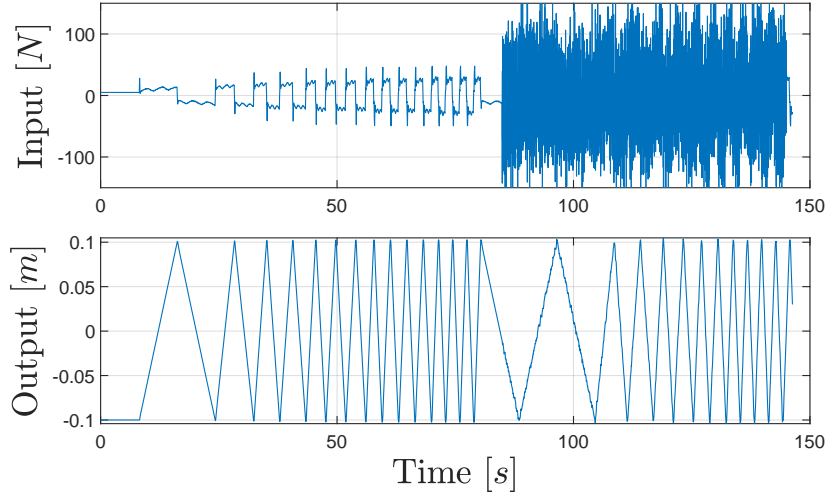}
    \caption{Visualization of the data set $Z^N$ generated on the CLM with input $u(k)$ (top window) and output $y(k)$ (bottom window).}
    \label{fig:DataSet}
\end{figure}
\emph{\textbf{Training data generation:}}
The data set $Z^N$ is generated in closed--loop by sampling $u(k)$ and $y(k)$ at a frequency of $1$ $kHz$, while exciting the system via:
\begin{enumerate}
    \item \emph{The reference} $r(k)$ using a third order reference moving from $-0.1$ to $0.1$~$m$ with acceleration $1$ $\frac{m}{s^2}$, jerk $1000$ $\frac{m}{s^3}$ and different velocities $n 0.025$ $\frac{m}{s}$, $n \in \{ 1, ..., 6\}$;
    \item \emph{The input} $u(k)$ using a zero--mean white noise with variance $50$ $N^2$ added to the input during the final half of the experiment.
\end{enumerate} 
There is not yet any feedforward controller available and used. The resulting input $u(k)$ and output $y(k)$ are visualized in Fig.~\ref{fig:DataSet}. 
The experiment ran for $146$~$s$ yielding $N = 146 \cdot 10^3$ data samples.

\emph{\textbf{Feedforward controllers:}}
We consider the following model parametrizations for the design of the feedforward controller. 
Firstly, the physics--based model derived from first--principle knowledge
\begin{equation}
    \label{eq:CLM_PhysicsBased}
    \hat{u} \big( \theta_{\textup{phy}}, \phi(k) \big) = m \Delta \delta^2y(k) + f_v \Delta \delta y(k) + f_c \Delta \textup{sign} \big( \delta y(k) \big),
\end{equation}
where $\theta_{\textup{phy}} = [m, f_v, f_c]^T$, with $m \in \mathbb{R}_{>0}$ the mass, $f_v \in \mathbb{R}_{>0}$ the viscous friction coefficient, and $f_c \in \mathbb{R}_{>0}$ the Coulomb friction coefficient.
Secondly, we consider a NN--based model, which is given as
\begin{equation}
    \label{eq:CLM_NNBased}
    \hat{u} \big( \theta_{\textup{NN}}, \phi(k) \big) = W_2 \tanh \big( W_1 \Delta [y(k+2), ..., y(k-2)]^T + B_1 \big) +B_2,
\end{equation}
with $n_1 = 24$ neurons.
Finally, the PGNN--based model is given as
\begin{align}
\begin{split}
    \label{eq:CLM_PGNNBased}
    \hat{u} \big( \theta, \phi(k) \big) = & m \Delta \delta^2 y(k) + f_v \Delta \delta y(k) + f_c \Delta \textup{sign} \big( \delta y(k) \big) \\
    & + W_2 \tanh \big( W_1 \Delta [y(k), \delta y(k), \delta^2 y(k)]^T + B_1 \big) + B_2. 
\end{split}
\end{align}
which also has $n_1 = 24$ neurons. 
As a result, the physics--based model~\eqref{eq:CLM_PhysicsBased} has $n_{\theta_{\textup{phy}}} = 3$, the NN--based model~\eqref{eq:CLM_NNBased} has $n_{\theta_{\textup{NN}}} = 5 \cdot 24 + 24 + 24+1 = 169$, and the PGNN--based model~\eqref{eq:CLM_PGNNBased} has $n_{\theta} = 3 + 3\cdot 24 + 24 + 24 + 1= 124$ parameters.
Based on these parametrizations, we train and apply the following state--of--the--art feedforward controllers:
\begin{enumerate}
    \item \emph{Physics--based feedforward} using the physics--based model~\eqref{eq:CLM_PhysicsBased} identified according to~\eqref{eq:IdentificationCriterion} with cost function~\eqref{eq:CostFunctionMSE};
    \item \emph{NN--based feedforward} using the NN--based model~\eqref{eq:CLM_NNBased} identified according to~\eqref{eq:IdentificationCriterion} with~\eqref{eq:CostFunctionPGNN} using $\Lambda_{\textup{NN}} = 3.2\cdot 10^{-12} I$;
    \item \emph{PINN--based feedforward} using the NN--based model~\eqref{eq:CLM_NNBased} identified with~\eqref{eq:CostFunctionPINN} with $c=0.5$, and, in addition, the regularization terms in~\eqref{eq:CostFunctionPGNNExtrapolation} with $\Lambda_{\textup{NN}} = 3.2\cdot 10^{-12} I$ and $\gamma = 0.1$;
    \item \emph{PGNN--based feedforward} using the PGNN--based model~\eqref{eq:CLM_PGNNBased} identified with~\eqref{eq:CostFunctionPGNN} using $\Lambda_{\textup{phy}}$ as in~\eqref{eq:RuleOfThumbLambda_phy} with $\epsilon = 1$ and $\Lambda_{\textup{NN}} = 10^{-5} I$;
    \item \emph{PGNN--based feedforward} using the PGNN--based model~\eqref{eq:CLM_PGNNBased} identified with~\eqref{eq:CostFunctionPGNNExtrapolation} using $\Lambda_{\textup{phy}}$ as in~\eqref{eq:RuleOfThumbLambda_phy} with $\epsilon = 1$ and $\Lambda_{\textup{NN}} = 10^{-5} I$ and $\gamma = 0.1$.  
\end{enumerate}

\begin{figure}
    \centering
    \includegraphics[width=1.0\linewidth]{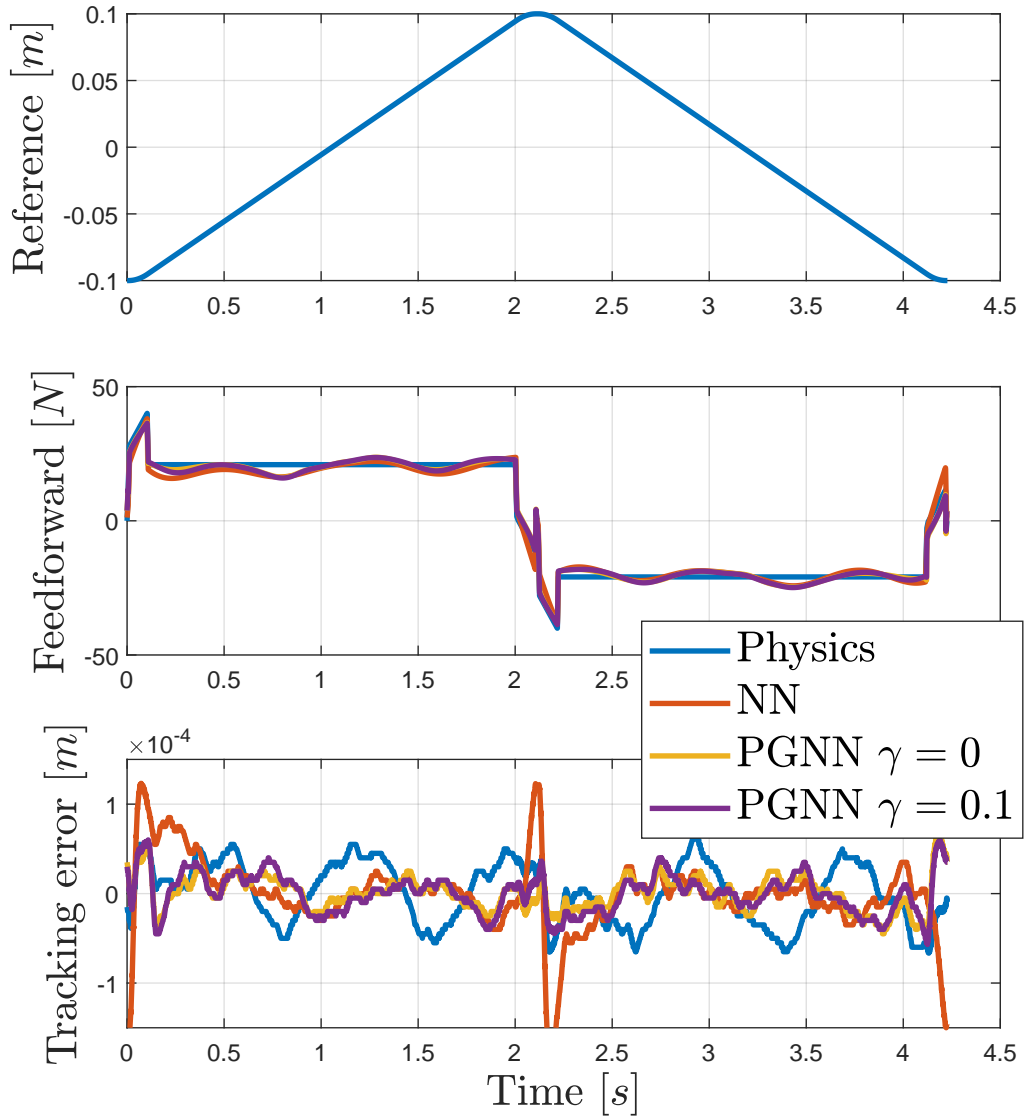}
    \caption{Reference (top window) with the generated feedforward signals (middle window) and the resulting tracking errors (bottom window).}
    \label{fig:CLM_TrackingError1}
\end{figure}
\begin{figure}
    \centering
    \includegraphics[width=1.0\linewidth]{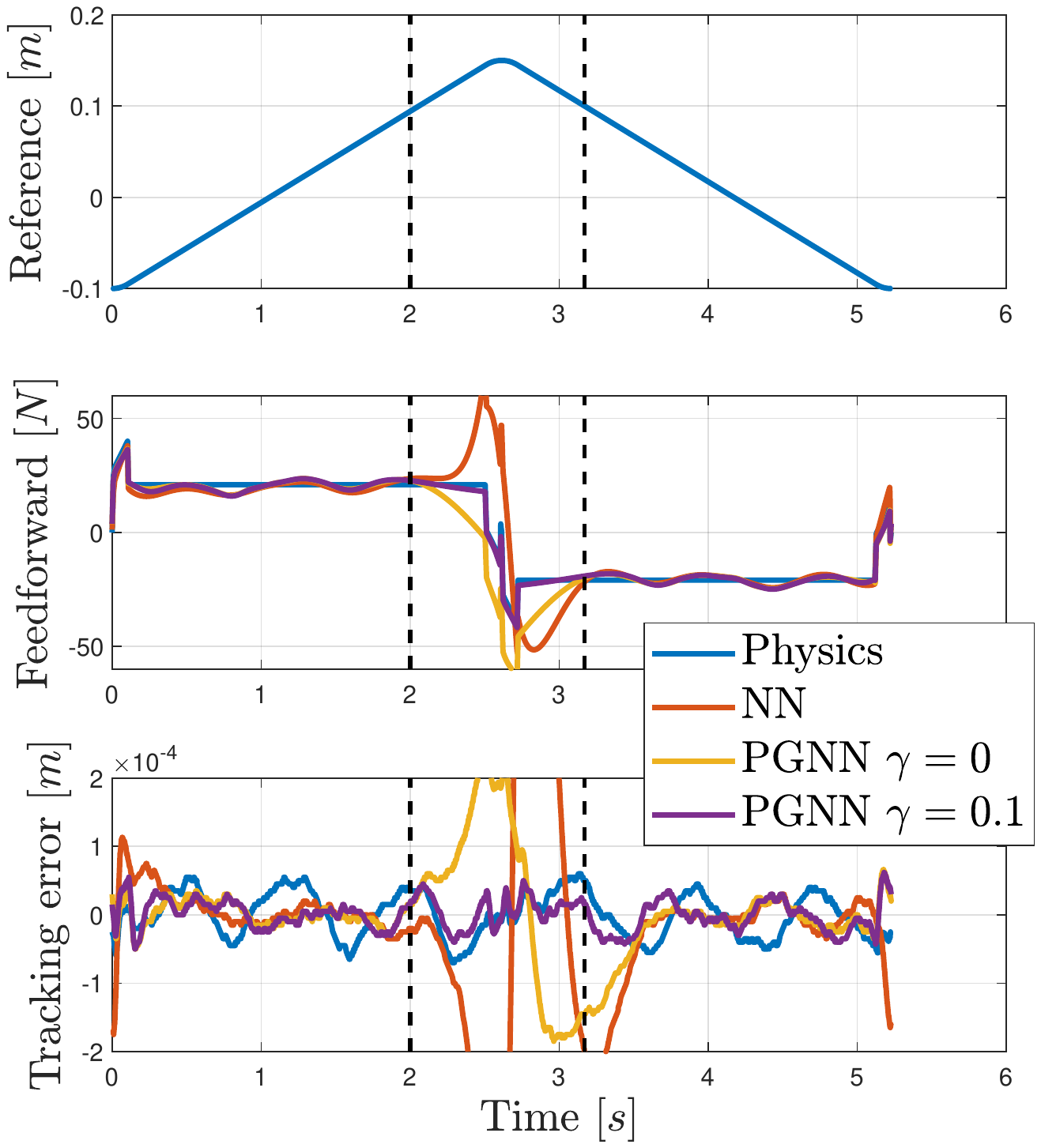}
    \caption{Reference (top window) with the generated feedforward signals (middle window) and the resulting tracking errors (bottom window).}
    \label{fig:CLM_TrackingError2}
\end{figure}

\emph{\textbf{Results:}} Fig.~\ref{fig:CLM_TrackingError1} shows the feedforward signals generated by the physics--based, NN--based and PGNN--based (with $\gamma = 0$ and $\gamma = 0.1$) feedforward controllers and the resulting tracking errors for a reference moving from $-0.1$ to $0.1$ with velocity $0.1$ $\frac{m}{s}$, acceleration $1$ $\frac{m}{s^2}$ and jerk $1000$ $\frac{m}{s^3}$. 
The physics--based feedforward controller experiences a significant tracking error during the constant velocity part of the reference, which is explained by its inability to identify the position dependency of the CLM dynamics as was shown in Fig.~\ref{fig:FrictionModels}.
On the other hand, the NN--based feedforward controller achieves a small tracking error during the constant velocity part, but suffers a loss of performance when accelerating and changing direction. Note that, in Fig.~\ref{fig:FrictionModels} we observed that the NN failed to identify the Coulomb friction term.
The PGNNs seem to combine the best of both approaches: a small tracking error during acceleration as does the physics--based feedforward, and a small tracking error during constant velocity as does the NN--based feedforward controller.

In order to illustrate the effect of the regularization term~\eqref{eq:RegularizationTermPhysics} in enhancing robustness of the feedforward controller, we apply the feedforward to a reference that exceeds the training data set in terms of position, i.e., $\max \big( r(k) \big) = 0.15$ $m$, compared to $\max (y_i) = 1.019$ $m$ in the data set $Z^N$, see also Fig.~\ref{fig:ExtrapolationDataSet}. 
For this reference, Fig.~\ref{fig:CLM_TrackingError2} shows the generated feedforward signals and the resulting tracking errors.
The physics--based feedforward controller does not suffer in performance when operated on this reference.
In contrary, the NN and PGNN with $\gamma = 0$ do not extrapolate to positions outside the training data set, as is observed by the large deviation of the feedforward signals and the resulting increase in the tracking error.
Regularization of the PGNN with $\gamma = 0.1$ in the cost function~\eqref{eq:CostFunctionPGNNExtrapolation} mitigates this issue by promoting compliance with the physical model.

\begin{figure}
    \centering
    \begin{subfigure}{0.65\linewidth}
        \includegraphics[width=1\linewidth]{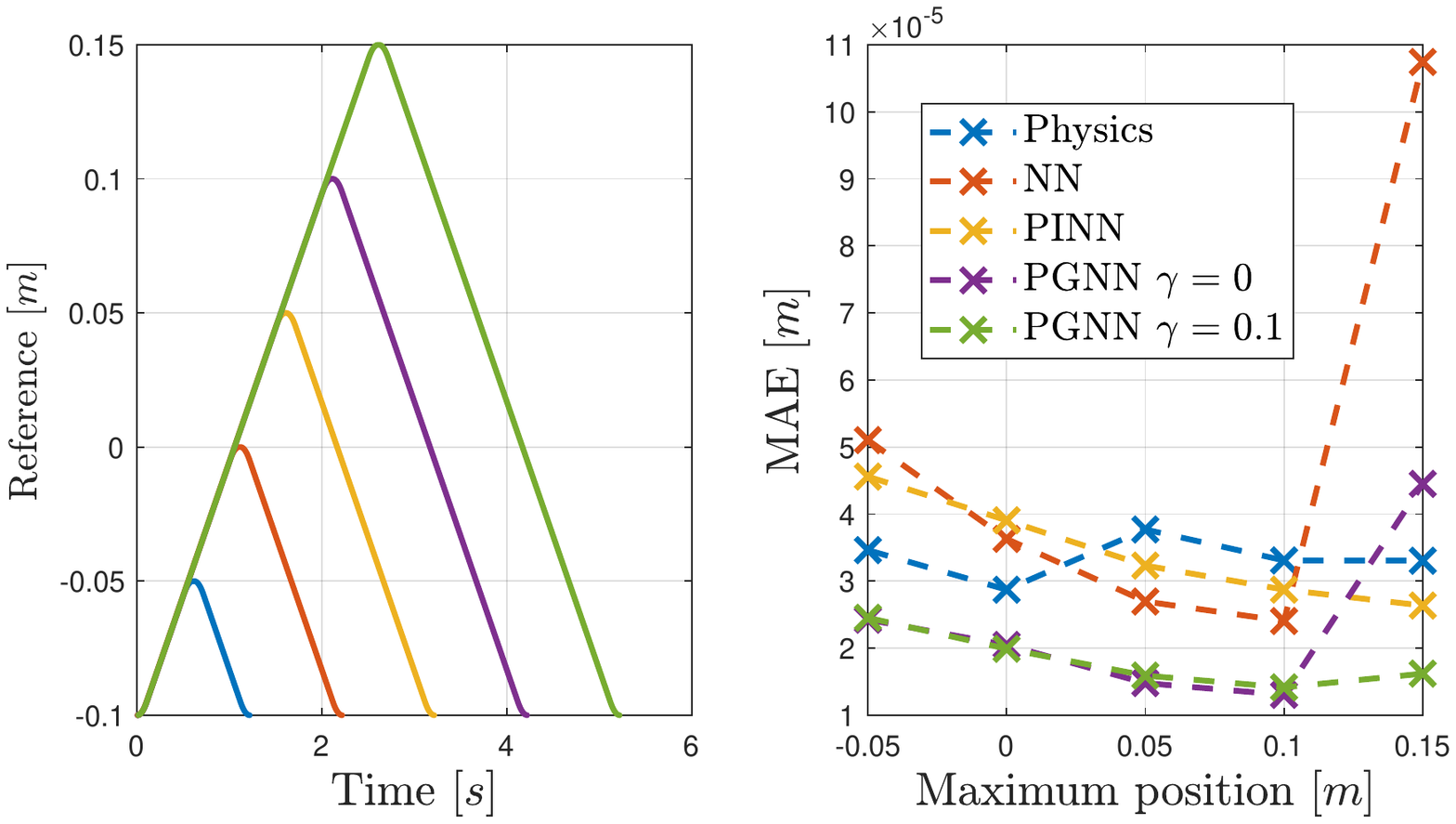}
        \caption{Varying reference position.}
        \label{fig:Robustness_VaryingPosition}
    \end{subfigure}
    \begin{subfigure}{0.65\linewidth}
        \includegraphics[width=1\linewidth]{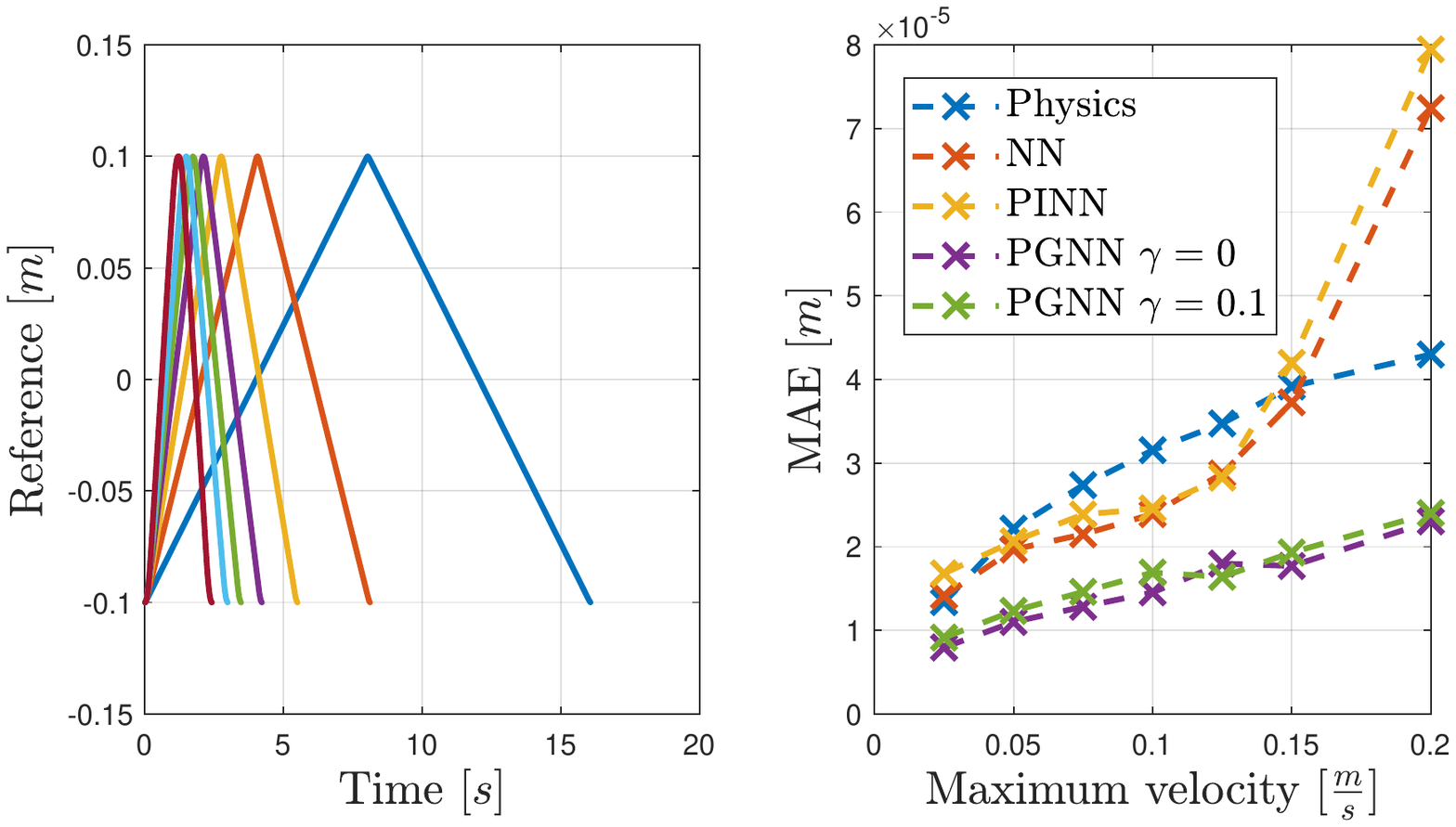}
        \caption{Varying reference velocity.}
        \label{fig:Robustness_VaryingVelocity}
    \end{subfigure}
    \begin{subfigure}{0.65\linewidth}
        \includegraphics[width=1\linewidth]{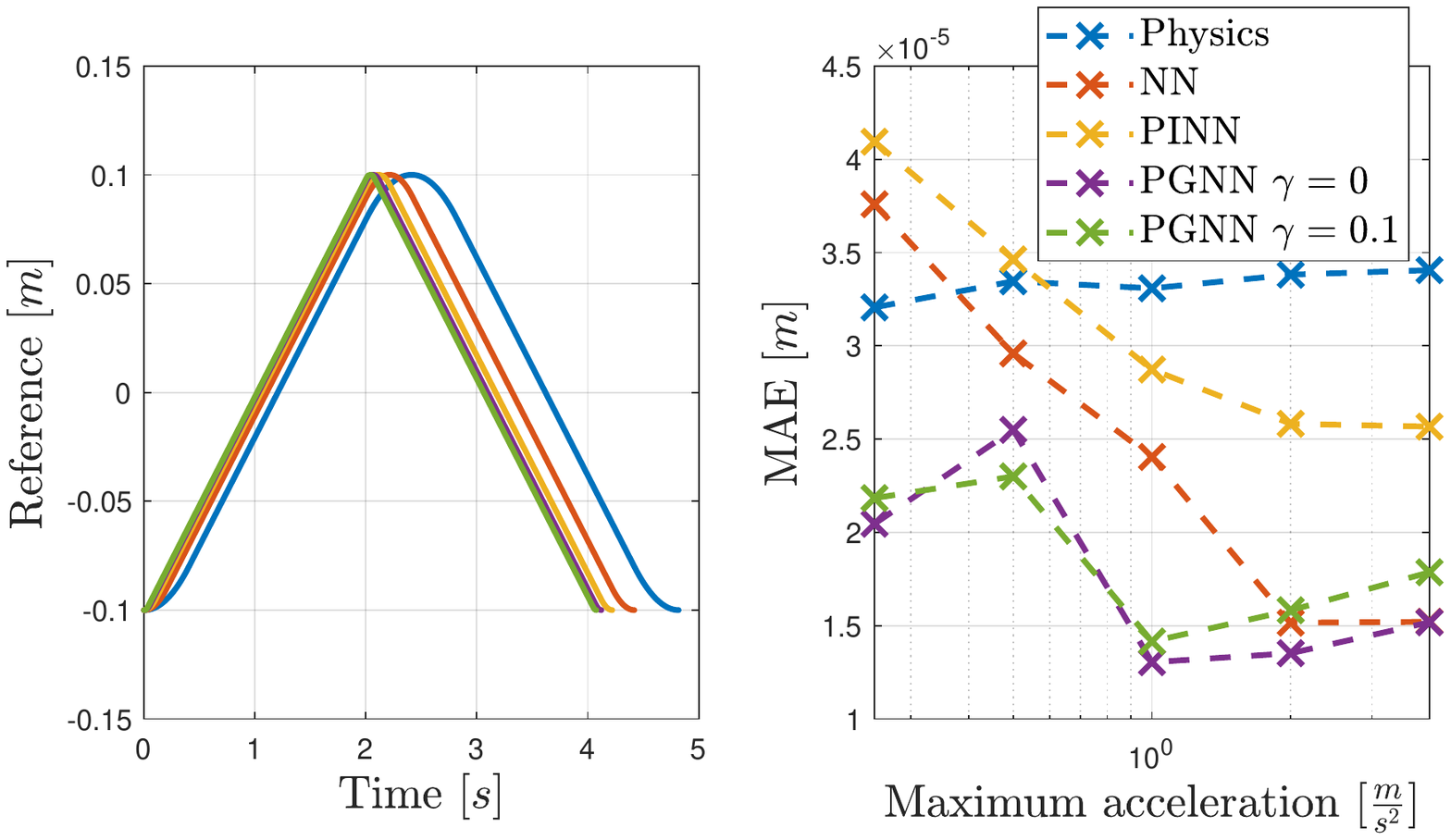}
        \caption{Varying reference acceleration.}
        \label{fig:Robustness_VaryingAcceleration}
    \end{subfigure}
    \caption{References (left windows) with varying position (a), velocity (b) and acceleration (c), and the MAE resulting from the different feedforward controllers (right windows) corresponding to each reference.}
    \label{fig:Robustness_Varying_CLM}
\end{figure}

We demonstrate robustness of the PGNN feedforward controllers by making variations of the reference in Fig.~\ref{fig:CLM_TrackingError1} by changing either the end position ($-0.05$, $0$, $0.05$, $0.1$, $0.15$ $m$), the maximum velocity ($0.025$, $0.05$, $0.075$, $0.1$, $0.125$, $0.15$, $0.2$ $\frac{m}{s}$), and the maximum acceleration ($0.25$, $0.5$, $1$, $2$, $4$ $\frac{m}{s^2}$).
The result is $5+7+5 = 17$ different references, for each of which we apply the physics--based, NN--based, PINN--based and PGNN--based (with $\gamma = 0$ and $\gamma = 0.1$) feedforward controllers, and compute the mean--absolute error (MAE) as
\begin{equation}
    \label{eq:MAE}
    MAE = \frac{1}{N} \sum_{k = 0}^{N-1} | \, e(k) \, |.
\end{equation}
The results are visualized in Fig.~\ref{fig:Robustness_Varying_CLM}, with, in particular, Fig.~\ref{fig:Robustness_VaryingPosition} showing the effect of the end position, Fig.~\ref{fig:Robustness_VaryingVelocity} the effect of the velocity, and Fig.~\ref{fig:Robustness_VaryingAcceleration} the effect of the acceleration.
Fig.~\ref{fig:Robustness_Varying_CLM} demonstrates the robust performance achieved by the PGNN, which, generally outperforms the physics--based feedforward controller with a factor $> 2$ in terms of the MAE for most references.

Similar to Fig.~\ref{fig:CLM_TrackingError2}, we observe the enhanced robustness imposed by the regularization term~\eqref{eq:RegularizationTerm}, i.e., the increase in MAE for the NN--based and PGNN--based ($\gamma = 0$) when the end position is $0.15$ $m$ is not observed for the PINN and PGNN ($\gamma = 0.1$).
In addition, by relying predominantly on the physics--based model, the PGNN demonstrates improved robustness compared to the (PI)NNs feedforward controllers when operated on higher velocities in Fig.~\ref{fig:Robustness_VaryingVelocity}. 
These velocities were not much represented in the training data set $Z^N$, see Fig.~\ref{fig:ExtrapolationDataSet}.

Fig.~\ref{fig:Robustness_VaryingAcceleration} shows a loss in performance of the NN, PINN, and PGNNs for small values of the acceleration. 
This observation can be explained by the flexibility of the NNs in these models, which did not observe any such acceleration during the training process due to the limited accuracy of the encoder measurements.
Namely, reconstructing the acceleration as $\Delta \delta^2 y(k)$ from $y(k)$ that is measured with steps of $5 \cdot 10^{-6}$ $m$, shows that we reconstruct the acceleration in the data set $Z^N$ at discrete values $n \frac{5 \cdot 10^{-6}}{2 (2T_s)^2} = n \frac{5}{8}$ $\frac{m}{s^2}$, $n \in \mathbb{Z}$. 

\label{sec:ValidationSimulation}
\begin{figure}
	\centering
	\includegraphics[width=0.8\linewidth]{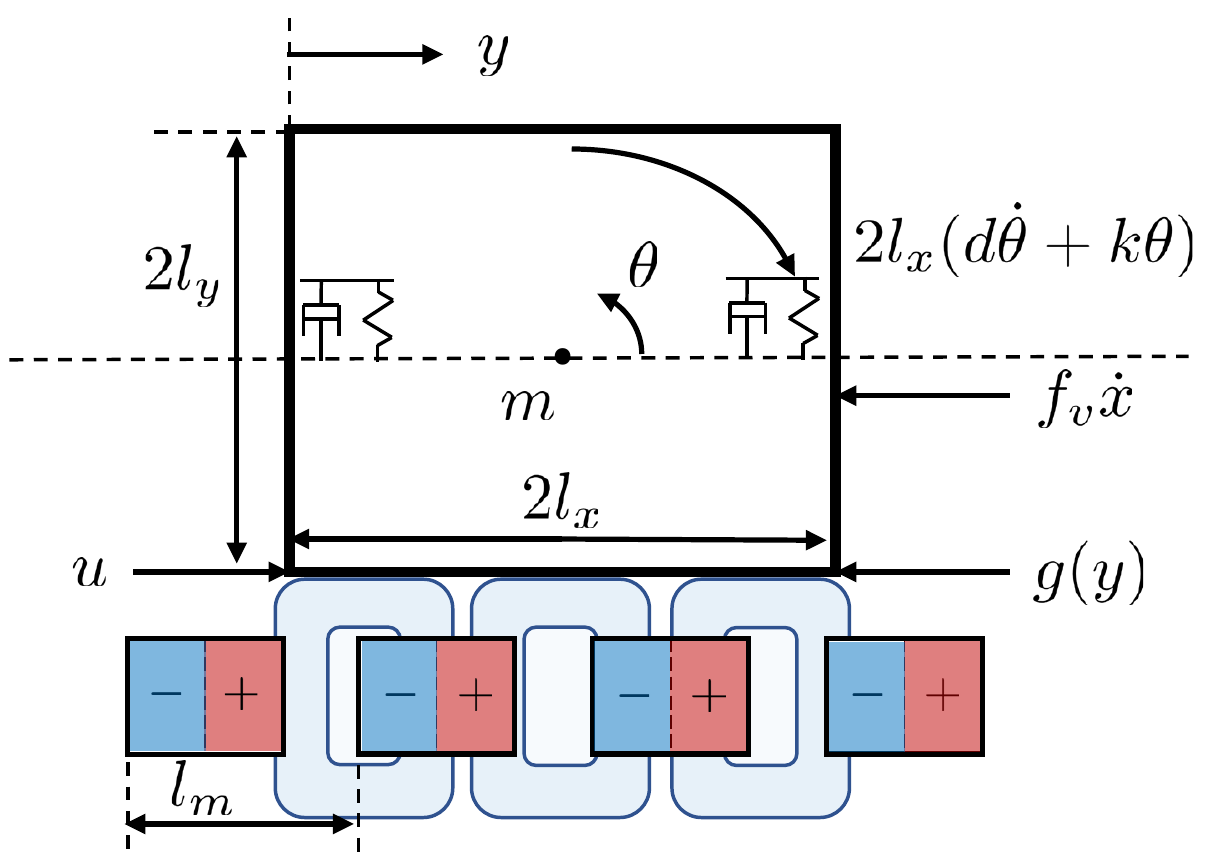}
	\caption{Rotating--translating mass with actuation and sensing on opposite sides of the centre of mass.}
	\label{fig:MotivatingExample}
\end{figure}
\subsection{Nonminimum phase rotating--translating mass}
In the previous section, the discretized dynamics of the CLM~\eqref{eq:CLM_DiscreteTime} yielded $n_b = 1$, such that the feedforward input $u_{\textup{ff}}(k)$ was not a function of past feedforward inputs $u_{\textup{ff}}(k-i)$, $i\in \mathbb{Z}_{>0}$. 
Therefore, the feedforward controllers derived from the models~\eqref{eq:CLM_PhysicsBased}, \eqref{eq:CLM_NNBased} and \eqref{eq:CLM_PGNNBased} are stable by default.
In this section, we demonstrate the efficacy of the PGNN feedforward control methodology by considering the higher order dynamical system illustrated in Fig.~\ref{fig:MotivatingExample}, which is nonminimum phase.

\begin{table}
\caption{Parameter values of the rotating--translating mass displayed in Fig.~\ref{fig:MotivatingExample}.}
\label{tab:ParameterValues}
\centering
\begin{tabular}{|c|c|c|c|c|c|c|c|} \hline
$m$ & $l_x, l_y$ & $M$ & $f_v$ & $k$ & $d$ & $l_m$ & $c$ \\ \hline \hline
$20$ & $1$  & $\frac{40}{3}$  & $50$  & $\frac{25 \cdot 10^{3}}{3}$  & $\frac{575}{3}$ & $0.05$ & $1$ \\ \hline
$kg$ & $m$ & $kgm^2$ & $\frac{kg}{s}$ & $\frac{kg}{s^2}$ & $\frac{kg}{s}$ & $m$ & $\frac{kgm}{s^2}$ \\ \hline
\end{tabular}
\end{table}
\emph{\textbf{System dynamics:}}
we consider a translating--rotating mass with force input $u(k)$ and position output $y(k)$ at opposite sides of the centre of mass, see Fig.~\ref{fig:MotivatingExample}. 
The continuous--time dynamics are given as
\begin{align}
\begin{split}
    \label{eq:DynamicsMotivatingExample}
    M \ddot{\theta}(t) & = l_y \big( u(t)- g \big( y(t) \big) \big) - 2 l_x \big( d \dot{\theta}(t) + k \theta (t)  \big) \\
    m \ddot{x}(t) & = u(t) - f_v \dot{x} - g \big( y(t) \big), \\
    g\big(y(t)\big)  &= c \sin \left( \frac{2 \pi}{l_m} y(t) \right), \\
    y(t) & = x(t) - l_y \theta(t).
\end{split}
\end{align}
In~\eqref{eq:DynamicsMotivatingExample}, $l_x, l_y \in \mathbb{R}_{\geq 0}$ are the width and height of the mass $m \in \mathbb{R}_{>0}$, $M = \frac{1}{3} m (l_x^2+l_y^2)$ is the moment of intertia, $f_v \in \mathbb{R}_{>0}$ the viscous friction, and $d, k \in \mathbb{R}_{>0}$ the damping and spring constant counteracting rotation at both ends of the mass.
The function $g\big(y(t)\big)$ is the cogging force and is assumed \emph{unknown}, with $l_m \in \mathbb{R}_{>0}$ the magnet pole pitch and $c \in \mathbb{R}_{>0}$ the cogging magnitude.
Parameter values are listed in Table~\ref{tab:ParameterValues}.
The system~\eqref{eq:DynamicsMotivatingExample} is controlled in closed--loop by a ZOH--discretized version of
\begin{equation}
\label{eq:FeedbackMotivatingExample}
	C(s) = 5 \cdot 10^3 \frac{s+4 \pi}{s+20 \pi},
\end{equation}
which achieves a $1.22$ $Hz$ bandwidth.

\emph{\textbf{Training data generation:}}
data is generated in closed--loop by sampling $u(k)$ and $y(k)$ at a frequency of $1$ $kHz$. 
We excite the system via the reference $r(k)$ in the top window of Fig.~\ref{fig:TrackingErrorMotivatingExample} for $5$ repetitions, and add a white noise with variance $50$ $N^2$ to the input $u(k)$.

\emph{\textbf{Feedforward controllers:}}
ZOH discretization of the transfer function in~\eqref{eq:DynamicsMotivatingExample} gives $n_a = n_b = 4$, $n_k = 0$.
The identified feedforward controllers are \emph{unstable}, due to the nonminimum phase transfer function in the system~\eqref{eq:DynamicsMotivatingExample}, i.e., $n_{\textup{us}} = 1$. 
We parametrize and train the following feedforward controllers:
\begin{enumerate}
    \item \emph{No feedforward}, i.e., $u_{\textup{ff}}(k) = 0$;
    \item \emph{Physics--based feedforward} with ZPETC stable inversion, i.e.,~\eqref{eq:FeedforwardIdentifiedGeneral},~\eqref{eq:PhysicsBasedParametrization} with linear physical model $f_{\textup{phy}} \big( \theta_{\textup{phy}}, \phi(k) \big) = \theta_{\textup{phy}}^T \phi(k)$ and parameters identified according to~\eqref{eq:IdentificationCriterion} with MSE cost function~\eqref{eq:CostFunctionMSE};
    \item \emph{PGNN--based feedforward} as in~\eqref{eq:PGNNFeedforwardLinear} with $n_1 = 16$ neurons, using ZPETC stable inversion identified according to~\eqref{eq:IdentificationCriterionStable} with cost function~\eqref{eq:CostFunctionPGNN} using $\Lambda_{\textup{NN}} = 0$ and $\Lambda_{\textup{phy}}$ as in~\eqref{eq:RuleOfThumbLambda_phy} with $\epsilon = 1$. Note that $\Theta$ as in~\eqref{eq:SetOptimization} is computed for the stable inverted PGNN;
    \item \emph{Physics--based feedforward} with extended preview, i.e.,~\eqref{eq:PGNNExtendedPreviewWindow} with linear physical model $f_{\textup{phy}} \big( \theta_{\textup{phy}}, \tilde{\phi}(k) \big) = \theta_{\textup{phy}}^T \tilde{\phi}(k)$, $n_{\textup{pw}} = 20$, and parameters identified according to~\eqref{eq:IdentificationCriterion} with MSE cost function~\eqref{eq:CostFunctionMSE};
    \item \emph{PGNN--based feedforward} with linear physical model and $n_{\textup{pw}} =20$ as in~\eqref{eq:PGNNExtendedPreviewWindow} with $n_1 = 16$ neurons identified according to~\eqref{eq:IdentificationCriterionStable} with cost function~\eqref{eq:CostFunctionPGNN} using $\Lambda_{\textup{NN}} = 0$ and $\Lambda_{\textup{phy}}$ as in~\eqref{eq:RuleOfThumbLambda_phy} with $\epsilon = 1$.
\end{enumerate}
We choose $Q = I$ and find $P$ by solving discrete--time Lyapunov equation $\tilde{A}(\theta_{u_{\textup{ff}}}^*)^T P \tilde{A}(\theta_{u_{\textup{ff}}}^*) - P + Q = 0$, where $\tilde{A}(\theta_{u_{\textup{ff}}}^*)$ is obtained either from the ZPETC approximation, or from the linear identification with $n_{\textup{pw}} = 20$. 

\begin{figure}
\centering
	\includegraphics[width=0.8\linewidth]{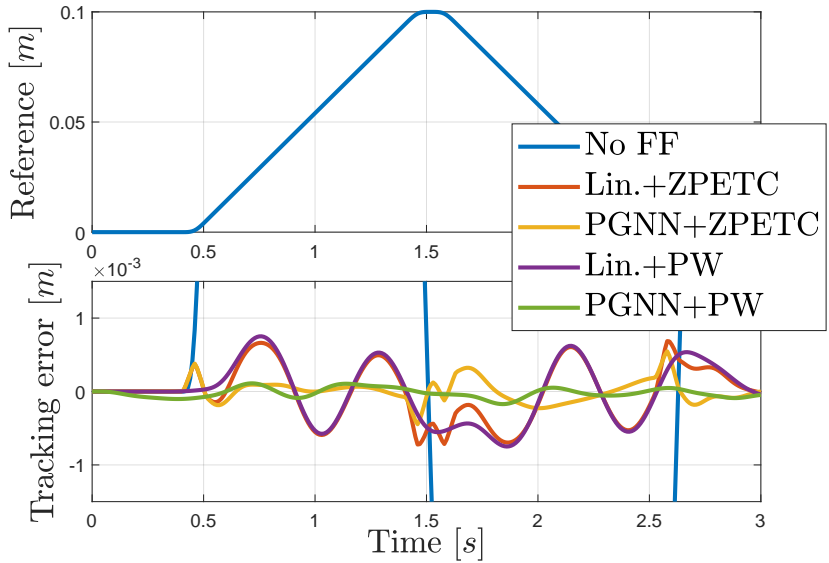}
	\caption{Tracking error resulting from different feedforward controllers.}
	\label{fig:TrackingErrorMotivatingExample}
\end{figure}
\begin{table}
\caption{MSE in $[m^2]$ of the tracking errors in Fig.~\ref{fig:TrackingErrorMotivatingExample}.}
\label{tab:PerformanceMotivatingExample}
\centering
\includegraphics[width=1.0\linewidth]{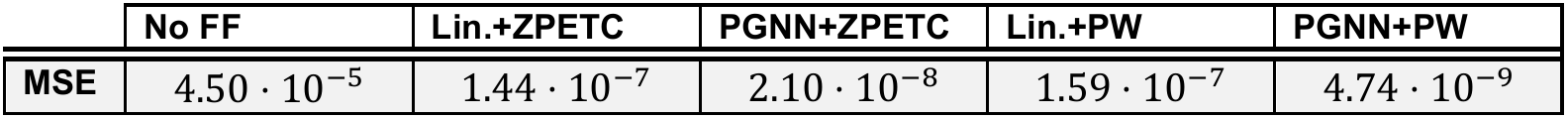}
\end{table}


\emph{\textbf{Results:}}
Fig.~\ref{fig:TrackingErrorMotivatingExample} shows the tracking error resulting from the aforementioned feedforward controllers for the reference shown in the top window. 
It is clear that the PGNN manages to significantly outperform the physics--based feedforward controller in terms of tracking error, as is confirmed by the mean--squared error (MSE) values listed in Table~\ref{tab:PerformanceMotivatingExample}. 

The PGNN improves over the linear, physics--based feedforward controller, since it is capable to identify the nonlinear dynamics $g \big( y(t) \big)$ in~\eqref{eq:DynamicsMotivatingExample}.
In addition, the stable approximation method ZPETC induces some tracking error that is particularly seen during the start of the movement for both the linear and PGNN feedforward controllers.
This error is not present when extending the preview window of the feedforward controllers, which allows for preactuation to deal with the nonminimum phase behaviour, rather than approximating it as does ZPETC.


\section{Conclusions}
\label{sec:Conclusions}
In this paper we developed a novel physics--guided neural network (PGNN) architecture that structurally merges a physics--based layer and a black--box neural layer in a single model. The parameters of the two layers are simultaneously identified, while a novel regularization cost function was used to prevent competition among layers and to preserve consistency of the physics--based parameters. Moreover, in order to ensure stability of PGNN feedforward controllers, we developed sufficient conditions for analyzing or imposing (during training) input--to--state stability of PGNNs, based on novel, less conservative Lipschitz bounds for neural networks. We showed that the developed PGNN feedforward control framework reaches a factor $2$ improvement with respect to physics--based mass--friction feedforward and it significantly outperforms alternative neural network based feedforward controllers for a real--life industrial linear motor and for a challenging non--minimum phase mechatronics example.


\appendix

\section{Proof of Proposition~\ref{prop:ConsistentIdentification}}
\label{app:propConsistentIdentification}
\begin{proof}
    Substitution of the inverse dynamics~\eqref{eq:UnknownDynamics} and PGNN~\eqref{eq:PGNNGeneral} into the cost function~\eqref{eq:CostFunctionPGNN} with $\Lambda_{\textup{NN}} = 0$~gives
    \begin{align}
    \begin{split}
    \label{eq:Proof1Step1}
        \frac{1}{N} & \sum_{i = 0}^{N-1} \Big( f_{\textup{phy}} ( \theta_{\textup{phy}}^*, \phi_i ) + g ( \phi_i ) - f_{\textup{phy}} ( \theta_{\textup{phy}}, \phi_i ) \\
        & - f_{\textup{NN}} \big( \theta_{\textup{NN}}, T(\phi_i) \big) \Big)^2 + \left\| \begin{bmatrix} 0 & 0 \\ 0 & \Lambda_{\textup{phy}} \end{bmatrix} \left( \theta - \begin{bmatrix} 0 \\ \theta_{\textup{phy}}^* \end{bmatrix} \right) \right\|_2^2.
    \end{split}
    \end{align}
    Both terms in~\eqref{eq:Proof1Step1} are non--negative, such that the global minimum is attained for $\hat{\theta}_{\textup{phy}} = \theta_{\textup{phy}}^*$ (regularization term) and $\hat{\theta}_{\textup{NN}} = \theta_{\textup{NN}}^*$ (MSE term, after substitution of $\theta_{\textup{phy}} = \theta_{\textup{phy}}^*$).  
\end{proof}

\section{Proof of Proposition~\ref{prop:OptimizedInitialization}}
\label{app:PropOptimizedInitialization}
\begin{proof}
    The proof of~\eqref{eq:OptimizedInitializationResult1} follows directly by observing that $\theta_{\textup{L}}^{(j)}$ in~\eqref{eq:OptimizedInitialization} is the least squares solution of~\eqref{eq:IdentificationCriterion},~\eqref{eq:CostFunctionPGNN} for the PGNN~\eqref{eq:PGNNLIP} given $\theta_{\textup{NL}}^{(j)}$.
    Since $M(\theta_{\textup{NL}}^{(j)})$ is non--singular, $\theta_{\textup{L}}^{(j)}$ is unique, such that~\eqref{eq:OptimizedInitializationResult1} holds with strict inequality if and only if $\theta_{\textup{L}}^{(j)} \neq \overline{\theta}_{\textup{L}}$. 
    Observe that, $\theta_{\textup{L}}^{(j)} - \overline{\theta}_{\textup{L}} =\theta_{\textup{L}}^{(j)} - M \big( \theta_{\textup{NL}}^{(j)} \big)^{-1} M \big( \theta_{\textup{NL}}^{(j)} \big) \overline{\theta}_{\textup{L}} \neq 0$ gives condition~\eqref{eq:OptimizedInitializationCondition} after substitution of~\eqref{eq:OptimizedInitialization} and~\eqref{eq:OptimizedInitializationMatrix} and using nonsingularity of~$M(\theta_{\textup{NL}}^{(j)})$. 

    Secondly,~\eqref{eq:ImprovedMSE} follows by observing that choosing $\theta_{\textup{L}}^{(j)} \neq \overline{\theta}_{\textup{L}}$ cannot decrease $V_{\textup{reg}}$ when $\Lambda =0$. 
    Correspondingly,~\eqref{eq:OptimizedInitializationResult1} states that $V_{\textup{MSE}}$ must decrease if~\eqref{eq:OptimizedInitializationCondition} holds, such that
    \begin{equation}
        \label{eq:Lemma41_Step1}
        V_{\textup{MSE}} \left( \begin{bmatrix} \theta_{\textup{NL}}^{(j)} \\ \theta_{\textup{L}}^{(j)} \end{bmatrix}, Z^N \right) \leq V \left( \begin{bmatrix} \theta_{\textup{NL}}^{(j)} \\ \overline{\theta}_{\textup{L}} \end{bmatrix}, Z^N \right) = V_{\textup{MSE}} (\theta_{\textup{phy}}^*, Z^N ),
    \end{equation}
    and with strict inequality if~\eqref{eq:OptimizedInitializationCondition} holds.
\end{proof}

\section{Proof of Theorem~\ref{th:StabilityPGNN}}
\label{app:ThPGNNStability}
\begin{proof}
    The proof follows by showing that $ V\big( x(k) \big) = x(k)^T P x(k)$ is an ISS--Lyapunov function as in Definition~\ref{def:ISSLyapunov}. 
    Condition~\eqref{eq:ISSLyapunov1} is satisfied with $\kappa_1 \big(\|\phi_{u_{\textup{ff}}}(k) \| \big) = \lambda_{\textup{min}}(P) \phi_{u_{\textup{ff}}}(k)^T \phi_{u_{\textup{ff}}}(k)$ and $\kappa_2 \big( \|\phi_{u_{\textup{ff}}}(k)\| \big) = \lambda_{\textup{max}}(P) \phi_{u_{\textup{ff}}}(k)^T \phi_{u_{\textup{ff}}}(k)$. 
    We compute the difference $V \big( \phi_{u_{\textup{ff}}}(k+1) \big) - V \big( \phi_{u_{\textup{ff}}}(k) \big)$ to obtain
    \begin{align}
    \begin{split}
        \label{eq:Proof2Step1}
        &V \big( \phi_{u_{\textup{ff}}}(k+1) \big) - V \big( \phi_{u_{\textup{ff}}}(k) \big) = - \phi_{u_{\textup{ff}}}(k)^T Q \phi_{u_{\textup{ff}}}(k) \\
        & \quad \quad \quad \quad  + 2 C_1 B^T P A \phi_{u_{\textup{ff}}}(k) +  B^T P B C_1^2, \\
        &C_1 := \hat{\theta}_r^T \phi_{{r}}(k+1) + f_{\textup{NN}} \left( \hat{\theta}_{\textup{NN}}, \begin{bmatrix} \phi_{r}(k+1) \\ \phi_{u_{\textup{ff}}}(k) \end{bmatrix} \right).
    \end{split}
    \end{align}
    For a term $2p^T q$ and a $\varepsilon \in \mathbb{R}_{>0}$, we can complete the squares as:
    \begin{equation}
        \label{eq:CompletingTheSquares}
        2p^T q = \varepsilon p^T p - \varepsilon (p - \frac{1}{\varepsilon} q)^T (p - \frac{1}{\varepsilon} q) + \frac{1}{\varepsilon} q^T q \leq \varepsilon p^T p + \frac{1}{\varepsilon} q^T q. 
    \end{equation}
    We complete the squares~\eqref{eq:CompletingTheSquares} of $2 C_1 B^T P A \phi_{u_{\textup{ff}}}(k)$ in~\eqref{eq:Proof2Step1} using $\varepsilon = \beta \lambda_{\textup{min}}(Q)$, $\beta \in \mathbb{R}_{>0}$, $p = \phi_{u_{\textup{ff}}}(k)$, and $q = A^T P B C_1$ to obtain
    \begin{align}
    \begin{split}
        \label{eq:Proof2Step2}
        V \big( & \phi_{u_{\textup{ff}}}(k+1) \big) - V \big( \phi_{u_{\textup{ff}}}(k) \big)  \leq - (1 - \beta ) \lambda_{\textup{min}} (Q) \phi_{u_{\textup{ff}}}(k)^T \phi_{u_{\textup{ff}}}(k) + c_{\beta} C_1^2, 
    \end{split}
    \end{align}
    with $c_{\beta} := B^T P B \left( I + \frac{1}{\beta \lambda_{\textup{min}}(Q)} A A^T P \right) B$. 
    Similarly, by substituting $C_1$ in~\eqref{eq:Proof2Step2} and completing the squares~\eqref{eq:CompletingTheSquares} for $2 \hat{\theta}_r^T \phi_r(k) f_{\textup{NN}}$ using $\varepsilon = \beta_1$, $\beta_1 \in \mathbb{R}_{>0}$, $p = f_{\textup{NN}}$ and $q = \hat{\theta}_r^T \phi_r(k+1)$, we obtain
    \begin{align}
    \begin{split}
        \label{eq:Proof2Step3}
        &V \big( \phi_{u_{\textup{ff}}}(k+1) \big) - V \big( \phi_{u_{\textup{ff}}}(k) \big) \leq - (1-\beta) \lambda_{\textup{min}}(Q) \phi_{u_{\textup{ff}}}(k)^T \phi_{u_{\textup{ff}}}(k) +\\
        & c_{\beta} ( 1 + \beta_1 ) f_{\textup{NN}} \left( \hat{\theta}_{\textup{NN}}, \begin{bmatrix} \phi_{{r}}(k+1) \\ \phi_{u_{\textup{ff}}}(k) \end{bmatrix} \right)^2 + c_{\beta} ( 1 + \frac{1}{\beta_1} ) \big(\hat{\theta}_r^T \phi_{{r}}(k+1) \big)^2,
    \end{split}
    \end{align}
    with $\beta_1 \in \mathbb{R}_{>0}$. 
    Finally, substitution of the Lipschitz condition~\eqref{eq:NNLipschitz} with $\phi_{\textup{ff}}^B (k) = 0$ and~\eqref{eq:EquilibriumOrigin}, and completing the squares~\eqref{eq:CompletingTheSquares} for $2 K_{u_{\textup{ff}}}^T \phi_{u_{\textup{ff}}}(k) K_{r}^T \phi_r(k+1)$ using $\varepsilon = \beta_2$, $\beta_2 \in \mathbb{R}_{>0}$, $p = K_{u_{\textup{ff}}}^T \phi_{u_{\textup{ff}}}(k)$, and $q = K_{r}^T \phi_r(k+1)$, gives
    \begin{align}
    \begin{split}
        \label{eq:Proof2Step4}
        &V \big( \phi_{u_{\textup{ff}}}(k+1) \big) - V \big( \phi_{u_{\textup{ff}}}(k) \big) \leq - \phi_{u_{\textup{ff}}}(k)^T \Big( (1-\beta) \lambda_{\textup{min}}(Q) - c_{\beta} \cdot \\
        & \quad \quad (1+\beta_1) (1+\beta_2) K_{u_{\textup{ff}}} K_{u_{\textup{ff}}}^T \Big) \phi_{u_{\textup{ff}}}(k) + \phi_r (k+1)^T c_{\beta} \cdot \\
        & \quad \quad \Big( (1+\beta_1) (1 + \frac{1}{\beta_2}) K_r K_r^T + (1 + \frac{1}{\beta_1}) \hat{\theta}_r \hat{\theta}_r^T \Big) \phi_r(k+1) \\
        & =: - \kappa_3 \big(\| \phi_{u_{\textup{ff}}}(k)\| \big) + \sigma \big( \| \phi_r(k+1) \| \big).
    \end{split}
    \end{align}
    It is clear that $\kappa_3 \big( | \phi_{u_{\textup{ff}}}(k) | \big)$ is a $\mathcal{K}_{\infty}$--function if the matrix between $\phi_{u_{\textup{ff}}}(k)^T$ and $\phi_{u_{\textup{ff}}}(k)$ is positive definite, which reduces to the scalar condition~\eqref{eq:StabilityPGNNCondition} when using the Cauchy--Schwartz inequality, i.e., substitute $\big( K_{u_{\textup{ff}}}^T \phi_{u_{\textup{ff}}}(k) \big)^2 = \phi_{u_{\textup{ff}}}(k)^T K_{u_{\textup{ff}}} K_{u_{\textup{ff}}}^T \phi_{u_{\textup{ff}}}(k) \leq K_{u_{\textup{ff}}}^T K_{u_{\textup{ff}}} \phi_{u_{\textup{ff}}}(k)^T \phi_{u_{\textup{ff}}}(k)$ and choose $\beta_1$, $\beta_2$ arbitrary small.
\end{proof}

\section{Proof of Lemma~\ref{le:StabilityAPrioriWithAssumption}}
\label{app:leStability1}
\begin{proof}
    The proof follows directly from Proposition~\ref{th:StabilityPGNN} and observing that limiting $\theta \in \Theta$ ensures that~\eqref{eq:StabilityPGNNCondition} is satisfied by using the NN Lipschitz bound in~\eqref{eq:NNLipschitzValue}. 
\end{proof}

\bibliographystyle{elsarticle-num} 
\bibliography{References}




\end{document}